\newcommand{\Id}{\mathrm{Id}}
\newcommand{\SCS}{\mathit{SCS}}
\newcommand{\CNOT}{\mathit{CNOT}}
\newcommand{\wt}{\mathrm{wt}}
\newcommand{\bigO}{\mathcal{O}}
\newcommand{\dicke}[2]{\ket{\smash{D_{#2}^{#1}}}}
\newtheorem{lemma}{Lemma}
\newtheorem{theorem}{Theorem}
\newtheorem{definition}{Definition}
\title{Deterministic Preparation of Dicke States}
\author{Andreas Bärtschi\footnote{Corresponding author: baertschi@lanl.gov\hfill LA-UR-19-22718} \and Stephan Eidenbenz}
\date{Los Alamos National Laboratory}
\begin{document}
\maketitle

\begin{abstract}
	The Dicke state $\dicke{n}{k}$ is an
	equal-weight superposition of all $n$-qubit states with Hamming Weight $k$ (i.e.~all strings of length $n$ with exactly $k$ ones over a binary alphabet).
	Dicke states are an important class of entangled quantum states that among other things serve as starting states for combinatorial optimization quantum algorithms.

	We present a deterministic quantum algorithm for the preparation of Dicke states.
	Implemented as a quantum circuit, our scheme uses $\bigO(kn)$ gates, has depth $\bigO(n)$ and needs no ancilla qubits.	
	The inductive nature of our approach allows for linear-depth preparation of arbitrary symmetric pure states
	and -- used in reverse -- yields a quasilinear-depth circuit for efficient compression of quantum information 
	in the form of symmetric pure states, improving on existing work requiring quadratic depth.
	All of these properties even hold for Linear Nearest Neighbor architectures.	
\end{abstract}

\section{Introduction}
\label{sec:intro}
Within quantum computing, the seemingly mundane task of (efficient) state preparation is actually a separate research topic. 
Recall that a quantum state over $n$ qubits is a superposition $\sum_{x \in \{0.1\}^n } c_x \ket{x}$ of all $2^n$ binary strings $x$  of length $n$ with complex weights $c_x$ such that $\sum_{x \in \{0,1\}^n } |c_x|^2 = 1$. 
The problem of preparing an arbitrary quantum state can be solved with $\Theta(2^n)$ quantum gates~\cite{shende2006synthesis}, 
which can be improved to a polynomial number of gates for states which have a polynomial number of non-zero weights $c_x$.
The intriguing algorithmic question then becomes for what other classes of quantum states do polynomial-time state preparation algorithms exist? 
Very few results exist on this topic and we are still far from having a comprehensive solution, however, Dicke states form such a class: 
the Dicke state $\dicke{n}{k}$ has $\tbinom{n}{k}$ non-zero weights, which is not polynomial in $n$ for super-constant $k$.

Among different types of highly entangled states, the family of Dicke states~\cite{Dicke1954} has garnered widespread attention for tasks in quantum networking~\cite{Prevedel2009}, quantum game theory~\cite{Oezdemir2007}, quantum metrology~\cite{Toth2012} 
and as starting states for combinatorial optimization problems via adiabatic evolution~\cite{Childs2002}. 
Perhaps most promisingly -- Dicke states can be used in the Quantum Alternating Operator Ansatz (QAOA) framework~\cite{Farhi2014,Hadfield2019} for combinatorial optimization problems with hard constraints, 
as a starting state for the actual QAOA algorithm where they represent a superposition of all feasible solutions (in some problem variations).

\newpage
\begin{definition}\label{def:dicke-state}
	A Dicke state $\dicke{n}{k}$\footnote{Various symbols for Dicke states are used in the existing literature, e.g., $\ket{\smash{D^n_k}}$, $\ket{\smash{D_n^{(k)}}}$, $\ket{\genfrac{}{}{0pt}{}{n}{k}}$ or $\ket{n;k}$.}
	is the equal superposition of all $n$-qubit states $\ket{x}$ with Hamming weight $\wt(x)=k$,
	\[	\dicke{n}{k} = \tbinom{n}{k}^{-\frac{1}{2}} \sum\nolimits_{x \in \left\{ 0,1 \right\}^n,\ \mathrm{wt}(x)=k}{\ket{x}}.	\]
\end{definition}

We have, e.g., $\dicke{4}{2} = \tfrac{1}{\sqrt{6}}\left(\ket{1100}+\ket{1010}+\ket{1001}+\ket{0110}+\ket{0101}+\ket{0011}\right)$,
a state that has been studied for its entanglement properties: from $\dicke{4}{2}$, we can generate both 3-qubit $W_3$-states $\dicke{3}{1}$ and $\mathit{GHZ}$-class $G_3$-states $\tfrac{1}{\sqrt{2}} (\dicke{3}{1}-\dicke{3}{2})$ 
by a (local) projective measurement of the same qubit~\cite{Kiesel2007}, whereas these two states cannot be transformed into each other by stochastic local manipulations~\cite{Duer2000}; 
these types of basic transformations of states are non-trivial in quantum computing.

\paragraph{Result Overview.} 
Despite successful experimental creation of Dicke states in physical systems such as trapped ions~\cite{Hume2009,Ivanov2013,Lamata2013}, atoms~\cite{Stockton2004,Xiao2007,Shao2010}, photons~\cite{Prevedel2009,Wieczorek2009} and superconducting qubits~\cite{Wu2016},
efficient quantum \emph{circuits} for the preparation of arbitrary Dicke states $\dicke{n}{k}$ have received little attention.
In this paper, we present -- as our main contribution -- a circuit for deterministic preparation of Dicke states which, given as input the easily prepared classical state $\ket{0}^{\otimes n-k}\ket{1}^{\otimes k}$, prepares the Dicke state $\dicke{n}{k}$.
Our circuit has depth $\bigO(n)$ -- independent of $k$ -- and needs $\bigO(kn)$ gates in total. 
Circuit depth is equivalent to run time and gate count is a measure for overall resource needs. 
In fact, any difference between gate count and depth can be attributed to gate-level parallelism. 
Finding minimal-depth circuits is particularly crucial for Noisy Intermediate Scale Quantum (NISQ) devices, which do not allow for full error correction, and thus experience (unwanted) decoherence the longer a computation lasts. 
Minimizing overall gate count is crucial as each gate operation introduces noise, thus impacting result quality.

Leveraging our main result, we show 
(i) that all our bounds even hold for Linear Nearest Neighbor architectures, where each qubit is connected only to its two neighbors, which is a more realistic assumption for most NISQ devices than the standard all-to-all connectivity, 
(ii) that our circuit can be extended to prepare arbitrary symmetric pure states using linear circuit depth, where a state is symmetric if it is invariant under permutation of the qubits, and
(iii) how to use our construction for compression of quantum information, which is the problem of compressing a symmetric pure $n$-qubit state into $\lceil \log(n+1) \rceil$ qubits without information loss.

\paragraph{Previous Approaches.} 
Previous work has prepared Dicke states probabilistically with success probability $\Omega(\tfrac{1}{\sqrt{n}})$ by applying a biased Hadamard transform to each qubit~\cite{Childs2002},
followed by postselecting the Dicke state through addition of each of the $n$ qubits into an ancilla register of size $\log n$ initialized to the $\ket{0}$ state~\cite{Chuang2000} and a projective measurement thereof. 
A later contribution~\cite{Chakraborty2014} uses a more involved preparation strategy -- giving numerical evidence of a constant-factor improved probability -- followed by a generalized parity measurement~\cite{Ionicioiu2008}, 
also pointing out the potential use of amplitude amplification.
Deterministic preparation circuits without the use of ancilla qubits have been known for the special case of $W$-states $\dicke{n}{1}$, 
either by an iterative construction of quadratic circuit size and depth~\cite{Diker2016} or by a linear number of large multi-controlled rotation gates~\cite{MSQuantumKatas}.
An inductive approach to construct Dicke states up to error $\varepsilon$~\cite{Mosca2001} uses $\Omega(\log k + \log\tfrac{1}{\varepsilon})$ ancilla qubits to count the Hamming weight of the qubits processed so far, 
to then use this register as a control for rotation gates on the next qubit, yielding a superlinear circuit size and depth overall. 
Our approach improves on all of these results in terms of circuit size and depth; additionally, it does not require ancilla qubits, is fully deterministic and in some cases more general.

\paragraph{Relation to Quantum Compression.} 
There exists an interesting relationship between Dicke states and quantum compression. Quantum compression can be understood through the quantum Schur-Weyl transform~\cite{Bacon2006}, 
which separates the angular momentum information of a state from its -- for symmetric states trivial -- permutation information.

The Schur-Weyl transform has been implemented experimentally for a separable symmetric $3$-qubit state~\cite{Rozema2014}, i.e. a state of the form
$(\alpha\ket{0} + \beta\ket{1})^{\otimes 3} = \sum_{\ell=0}^3 \alpha^{3-\ell} \beta^{\ell} \smash{\binom{3}{\ell}}^{1/2} \dicke{3}{\ell}$.
A high-level description of a circuit for general $n$, using no ancilla qubits, has also been developed~\cite{Plesch2010}.
The major circuit part in~\cite{Plesch2010} is of size and depth $\Theta(n^2)$ and maps each Dicke state $\dicke{n}{\ell}$ to the state $\ket{0}^{\otimes \ell-1}\ket{1}\ket{0}^{\otimes n-\ell}$, too.
Its inverse circuit can therefore be used to prepare Dicke states with depth $\Theta(n^2)$.
Our approach in reverse, on the other hand, will yield a quantum compression circuit of size $\bigO(n^2)$ and reduced quasilinear depth $\tilde{\bigO}(n)$, 
where $\tilde{\bigO}(\cdot)$ hides polylogarithmic factors due to the compression part of the circuit 
(mapping terms of the form $\ket{0}^{\otimes \ell-1}\ket{1}\ket{0}^{\otimes n-\ell}$ or $\ket{0}^{\otimes n-\ell}\ket{1}^{\otimes \ell}$, respectively, into $\lceil \log(n+1)\rceil$ qubits).

\paragraph{Outline.} 
This article is organized as follow: In Section~\ref{sec:preparation}, we present an iterative construction of a circuit for deterministic preparation of arbitrary Dicke states. 
We analyze its gate count and circuit depth, and extend these bound to Linear Nearest Neighbor architectures in Section~\ref{sec:circuit-size}.
In Section~\ref{sec:superposition-compression}, we show how our construction can be used to create arbitrary symmetric pure states, written as a superposition of Dicke states, and we present an improved scheme for efficient compression of quantum information.
A detailed comparison with existing work on quantum compression as well as a step-by-step example is in Appendix~\ref{app:comparison}.

\section{Deterministic Dicke State Preparation}
\label{sec:preparation}

In order to prepare Dicke states, we design a unitary operator $U_{n,k}$ which, given as input the classical state $\ket{0}^{\otimes n-k}\ket{1}^{\otimes k}$ (which appears itself as a term in the superposition $\dicke{n}{k}$), 
generates the entire Dicke state $\dicke{n}{k}$. Additionally,  $U_{n,k}$ also generates Dicke states $\dicke{n}{\ell}$ for smaller~$\ell < k$, 
when given as input a string $\ket{0}^{\otimes n-\ell}\ket{1}^{\otimes \ell}$:
\begin{definition}
	Denote by $U_{n,k}$ any unitary satisfying for all $\ell \leq k\colon U_{n,k}\ket{0}^{\otimes n-\ell}\ket{1}^{\otimes \ell} = \dicke{n}{\ell}$.%
	\label{def:unitary}
\end{definition}
Having this property not only for $\ell =k$ but for all $\ell \leq k$ will allow us to build a unitary $U_{n,k}$ inductively, by making use of the following 
composition (also observed, e.g., in~\cite{Lamata2013,Moreno2018}):
\begin{lemma}
	\label{lem:sum-form}
	Dicke states $\dicke{n}{\ell}$ have the inductive sum form
	\[	\dicke{n}{\ell} = \sqrt{\tfrac{\ell}{n}} \dicke{n-1}{\ell-1} \otimes\ket{1} + \sqrt{\tfrac{n-\ell}{n}} \dicke{n-1}{\ell} \otimes\ket{0}.	\]
\end{lemma}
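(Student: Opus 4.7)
The plan is to unfold the definition of $\dicke{n}{\ell}$ and partition the sum according to the value of the last qubit. Every $n$-qubit basis string $x$ with $\wt(x)=\ell$ falls into exactly one of two categories: either $x = y\,1$ with $y \in \{0,1\}^{n-1}$ and $\wt(y)=\ell-1$, or $x = y\,0$ with $\wt(y)=\ell$. So I would write
\[
\dicke{n}{\ell} = \tbinom{n}{\ell}^{-1/2}\Bigl(\!\!\sum_{\wt(y)=\ell-1}\!\! \ket{y}\otimes\ket{1} \;+\!\! \sum_{\wt(y)=\ell}\!\! \ket{y}\otimes\ket{0}\Bigr).
\]

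Next I would pull in the normalization factors of the two smaller Dicke states to rewrite each inner sum as $\tbinom{n-1}{\ell-1}^{1/2}\dicke{n-1}{\ell-1}$ and $\tbinom{n-1}{\ell}^{1/2}\dicke{n-1}{\ell}$ respectively. Combined with the outer factor $\tbinom{n}{\ell}^{-1/2}$, the coefficients become $\sqrt{\tbinom{n-1}{\ell-1}/\tbinom{n}{\ell}}$ and $\sqrt{\tbinom{n-1}{\ell}/\tbinom{n}{\ell}}$.

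The last step is a direct binomial computation: $\tbinom{n-1}{\ell-1}/\tbinom{n}{\ell} = \ell/n$ and $\tbinom{n-1}{\ell}/\tbinom{n}{\ell} = (n-\ell)/n$, which simultaneously verifies that the sum of the squared coefficients is $1$ (equivalent to Pascal's rule $\tbinom{n}{\ell} = \tbinom{n-1}{\ell-1} + \tbinom{n-1}{\ell}$). Substituting gives exactly the claimed identity.

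There is no real obstacle here — the statement is essentially a rephrasing of Pascal's identity in the amplitude language of symmetric states. The only thing to be slightly careful about is the edge cases $\ell=0$ and $\ell=n$, where one of the two terms vanishes (the corresponding binomial coefficient in the numerator is zero, so the square-root coefficient is zero); both still match the definition of $\dicke{n}{0}=\ket{0}^{\otimes n}$ and $\dicke{n}{n}=\ket{1}^{\otimes n}$, so the identity extends to the boundary without modification.
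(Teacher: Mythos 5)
Your proof is correct and follows exactly the same route as the paper: split the sum defining $\dicke{n}{\ell}$ by the value of the last qubit, absorb the normalizations of the two $(n-1)$-qubit Dicke states, and simplify the binomial ratios to $\ell/n$ and $(n-\ell)/n$. Your extra remark on the boundary cases $\ell=0$ and $\ell=n$ is a nice touch the paper leaves implicit.
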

\begin{proof}
	We simply rewrite $\dicke{n}{\ell} := \binom{n}{\ell}^{-\frac{1}{2}} \sum\nolimits_{x \in \left\{ 0,1 \right\}^n,\, \mathrm{wt}(x)=\ell}{\ket{x}}$ as
	\begin{align*}
		\dicke{n}{\ell}	& = \sqrt{\tfrac{1}{\binom{n}{\ell}}} \sum_{\substack{x\in\left\{ 0,1 \right\}^{n-1}\\ \mathrm{wt}(x)=\ell-1}}{\ket{x}}\otimes \ket{1}
				  + \sqrt{\tfrac{1}{\binom{n}{\ell}}} \sum_{\substack{x\in\left\{ 0,1 \right\}^{n-1}\\ \mathrm{wt}(x)=\ell}}{\ket{x}}\otimes \ket{0}	\\
				& = \sqrt{\tfrac{\binom{n-1}{\ell-1}}{\binom{n}{\ell}}} \dicke{n-1}{\ell-1} \otimes\ket{1} 
				  + \sqrt{\tfrac{\binom{n-1}{\ell}}{\binom{n}{\ell}}} \dicke{n-1}{\ell} \otimes\ket{0}							\\
				& = \sqrt{\tfrac{\ell}{n}} \dicke{n-1}{\ell-1} \otimes\ket{1} 
				  + \sqrt{\tfrac{n-\ell}{n}} \dicke{n-1}{\ell} \otimes\ket{0}.\qedhere
	\end{align*}
\end{proof}

The Dicke states $\dicke{n-1}{\ell-1}$ and $\dicke{n-1}{\ell}$ can both be prepared by the same unitary $U_{n-1,k}$ 
given the classical input states $\ket{0}^{\otimes n-\ell}\ket{1}^{\otimes \ell-1}$ and $\ket{0}^{\otimes n-1-\ell}\ket{1}^{\otimes \ell}$, respectively. 
The idea is therefore -- in order to inductively design $U_{n,k}$ -- to apply the composition given by Lemma~\ref{lem:sum-form} to the \emph{input states} $\ket{0}^{\otimes n-\ell}\ket{1}^{\otimes \ell}$ for all $\ell \leq k$, before applying the smaller unitary $U_{n-1,k}$.
Hence for $\ell \leq k$, we are looking for unitary transformations of the form 
\[ \ket{0}^{\otimes n-k-1}\ket{0}^{\otimes k+1-\ell}\ket{1}^{\otimes\ell} \mapsto \sqrt{\tfrac{\ell}{n}} \ket{0}^{\otimes n-k-1}\ket{0}^{\otimes k+1-\ell}\ket{1}^{\otimes\ell} + \sqrt{\tfrac{n-\ell}{n}}  \ket{0}^{\otimes n-k-1}\ket{0}^{\otimes k-\ell}\ket{1}^{\otimes\ell}\ket{0}. \]
Note that this transformation acts trivially on the first $n-k-1$ qubits. Intuitively, it can be described as taking the last $k+1$ of $n$ qubits as an input, splitting the input term into a superposition of two parts, and cyclicly shifting the second part by one position to the left.
We call a unitary that simultaneously implements this transformation for all $\ell \in 0,\dots,k$ a \emph{Split} \& \emph{Cyclic Shift} unitary $\SCS_{n,k}$:

\begin{definition}
	\label{def:scs}
	Denote by $\SCS_{n,k}$ any unitary satisfying for all $\ell \in 1,\dots,k$, where $k<n$:
	\begin{align*}
		\SCS_{n,k} \ket{0}^{\otimes k+1}\phantom{\ket{1}^{\otimes\ell-\ell}}	& =	\ket{0}^{\otimes k+1}, 			\\
		\SCS_{n,k} \ket{0}^{\otimes k+1-\ell}\ket{1}^{\otimes\ell}		& = \sqrt{\tfrac{\ell}{n}} \ket{0}^{\otimes k+1-\ell}\ket{1}^{\otimes\ell} + \sqrt{\tfrac{n-\ell}{n}}  \ket{0}^{\otimes k-\ell}\ket{1}^{\otimes\ell}\ket{0},\\
		\SCS_{n,k} \ket{1}^{\otimes k+1}\phantom{\ket{1}^{\otimes\ell-\ell}}	& =	\ket{1}^{\otimes k+1}.	
	\end{align*}
\end{definition}

Before we describe the inductive construction of the unitaries $U_{n,k}$ in terms of unitaries $\SCS_{n,k}$ and $U_{n-1,k}$, we review $\SCS_{n,k}$ (acting on the last $k+1$ qubits) and $U_{n-1,k}$ (acting on the first $n-1$ qubits) in comparison:
\vspace{-5ex}
\begin{center}
	\begin{minipage}[t]{0.6\textwidth}
		\begin{align*}
			\SCS_{n,k}\colon
			\ket{00..000}	& \mapsto	\ket{00..000}											\\[1ex]
			\ket{00..001}	& \mapsto	\smash{\sqrt{\tfrac{1}{n}}} \ket{00..001} + \smash{\sqrt{\tfrac{n-1}{n}}} \ket{00..010}		\\[1ex]
			\ket{00..011}	& \mapsto	\smash{\sqrt{\tfrac{2}{n}}} \ket{00..011} + \smash{\sqrt{\tfrac{n-2}{n}}} \ket{00..110}		\\[-6pt]
			\		& \hspace*{6pt}\vdots												\\[-4pt]
			\ket{01..111}	& \mapsto	\smash{\sqrt{\tfrac{k}{n}}} \ket{01..111} + \smash{\sqrt{\tfrac{n-k}{n}}} \ket{11..110}		\\[1ex]
			|\hspace*{-1pt}\underbrace{\hspace*{-1pt}11..111\hspace*{-1pt}}_{k+1}\hspace*{1pt}\rangle	& \mapsto \ket{11..111}
		\end{align*}
	\end{minipage}%
	\begin{minipage}[t]{0.4\textwidth}
		\begin{align*}
			U_{n-1,k}\colon
			\ket{0..000..00}	& \mapsto	\dicke{n-1}{0}	\\[1ex]
			\ket{0..000..01}	& \mapsto	\dicke{n-1}{1}	\\[1ex]
			\ket{0..000..11}	& \mapsto	\dicke{n-1}{2}	\\[-6pt]
			\		& \hspace*{6pt}\vdots			\\[-4pt]
			\ket{0..001..11}	& \mapsto	\dicke{n-1}{k-1}\\[1ex]
			|\hspace*{-6pt}\underbrace{0..0}_{n-1-k}\hspace*{-6pt}\underbrace{\hspace*{-1pt}11..11\hspace*{-1pt}}_{k}\hspace*{1pt}\rangle	& \mapsto \dicke{n-1}{k}
		\end{align*}
	\end{minipage}
\end{center}

\subsection[Inductive Construction of U\_n,k]{Inductive Construction of $U_{n,k}$}
An explicit construction of Split \& Cyclic Shift unitaries in terms of standard gates will be given in Subsection~\ref{sec:preparation:construction}.
For now, however, we will show how arbitrary $U_{n,k}$ unitaries can be constructed inductively from unitaries $\SCS_{n,k}$ (acting on the last $k+1$ qubits) and $U_{n-1,k}$ (acting on the first $n-1$ qubits).
Clearly, we must have $U_{1,1} = \Id$. We construct unitaries of the form $U_{k,k}$ by iteratively applying $\SCS_{k,k-1}$ immediately before $U_{k-1,k-1}$, i.e. $U_{k,k} = (U_{k-1,k-1} \otimes \Id) \cdot \SCS_{k,k-1}$.
Arbitrary unitaries $U_{n,k}$ can be built by preceding $U_{n-1,k}$ with $\SCS_{n,k}$, as shown in Figure~\ref{fig:induction}, giving 
$U_{n,k} = (U_{n-1,k} \otimes \Id) \cdot (\Id^{\otimes n-k-1} \otimes \SCS_{n,k})$.\footnote{An inductive approach
which ``sandwiches'' smaller unitaries has previously been used for $W$-states~$\dicke{n}{1}$, albeit with depth $\bigO(n^2)$~\cite{Diker2016}.} 
Telescoping these recursions we get:
\begin{lemma}
	\label{lem:induction}
	The following inductive construction of $U_{n,k}$ is consistent with Definition~\ref{def:unitary}:
	\[ U_{n,k}	 
			:=	\prod_{\ell=2}^{k} \left( \SCS_{\ell,\ell-1} \otimes \Id^{\otimes n-\ell} \right) \cdot \prod_{\ell=k+1}^{n} \left( \Id^{\otimes \ell-k-1}\otimes \SCS_{\ell,k} \otimes \Id^{\otimes n-\ell} \right).	\]%
\end{lemma}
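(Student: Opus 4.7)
The plan is to prove Lemma~\ref{lem:induction} by induction on $n$, grounded in the two basic recursions flagged in the paragraph above: the off-diagonal recursion $U_{n,k} = (U_{n-1,k}\otimes\Id)\cdot(\Id^{\otimes n-k-1}\otimes\SCS_{n,k})$ for $n>k$, and the diagonal recursion $U_{k,k} = (U_{k-1,k-1}\otimes\Id)\cdot\SCS_{k,k-1}$, with base $U_{1,1}=\Id$. I would first verify that the closed-form product in the lemma statement is what one obtains by telescoping these two recursions: unrolling the off-diagonal recursion rewrites $U_{n,k}$ as $U_{k,k}$ preceded (in order of application, i.e.\ to its right) by the factors $\SCS_{\ell,k}$ for $\ell=k+1,\dots,n$; unrolling the diagonal recursion then rewrites $U_{k,k}$ as the string $\SCS_{\ell,\ell-1}$ for $\ell=2,\dots,k$. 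Keeping track of which qubits each factor acts on yields exactly the identity-padded product in the lemma statement. This step is pure bookkeeping.

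The substantive part is then to verify, by the same induction, that the recursive $U_{n,k}$ satisfies Definition~\ref{def:unitary}: $U_{n,k}\ket{0}^{\otimes n-\ell}\ket{1}^{\otimes \ell}=\dicke{n}{\ell}$ for every $\ell\leq k$. The base case $U_{1,1}=\Id$ gives $\ket{0}=\dicke{1}{0}$ and $\ket{1}=\dicke{1}{1}$ directly. For the inductive step both recursions are handled uniformly: apply the rightmost $\SCS$ factor first to $\ket{0}^{\otimes n-\ell}\ket{1}^{\otimes\ell}$. By Definition~\ref{def:scs}, for $1\leq \ell \leq n-1$ this outputs
\[ \sqrt{\tfrac{\ell}{n}}\,\ket{0}^{\otimes n-\ell}\ket{1}^{\otimes\ell} \;+\; \sqrt{\tfrac{n-\ell}{n}}\,\ket{0}^{\otimes n-\ell-1}\ket{1}^{\otimes\ell}\ket{0}. \]
Applying $U_{n-1,k}\otimes\Id$ (or $U_{k-1,k-1}\otimes\Id$ in the diagonal case) to the first $n-1$ qubits, and invoking the induction hypothesis on each branch — whose Hamming weights $\ell-1$ and $\ell$ are both at most $k$ (resp.\ at most $k-1$) — produces $\sqrt{\ell/n}\,\dicke{n-1}{\ell-1}\otimes\ket{1}+\sqrt{(n-\ell)/n}\,\dicke{n-1}{\ell}\otimes\ket{0}$, which equals $\dicke{n}{\ell}$ by Lemma~\ref{lem:sum-form}.

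The only subtlety, and the place I would slow down to check, is the boundary weights $\ell=0$ and $\ell=k$ in the diagonal recursion: there the smaller unitary $U_{k-1,k-1}$ knows only how to prepare Dicke states of weight at most $k-1$, so branching into a weight-$k$ component would be fatal. This is exactly what the two trivial lines of Definition~\ref{def:scs} prevent, fixing $\ket{0}^{\otimes k}$ and $\ket{1}^{\otimes k}$ under $\SCS_{k,k-1}$; the subsequent $U_{k-1,k-1}\otimes\Id$ then sends them to $\dicke{k-1}{0}\otimes\ket{0}=\dicke{k}{0}$ and $\dicke{k-1}{k-1}\otimes\ket{1}=\dicke{k}{k}$, respectively. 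The weight $\ell=0$ case of the off-diagonal recursion is absorbed the same way. Apart from this boundary check and the tensor-product bookkeeping of where each $\SCS$ acts, the proof is a direct verification.
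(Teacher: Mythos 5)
Your proposal is correct and follows essentially the same route as the paper: induction on $n$ via the off-diagonal step $U_{n-1,k}\rightarrow U_{n,k}$ and the diagonal step $U_{k-1,k-1}\rightarrow U_{k,k}$, applying the rightmost $\SCS$ factor, invoking Lemma~\ref{lem:sum-form} together with the induction hypothesis, and disposing of the boundary weight $\ell=k$ (and $\ell=0$) through the trivial fixed-point lines of Definition~\ref{def:scs}. Your explicit unrolling of the two recursions into the identity-padded product is a welcome spelling-out of what the paper compresses into the phrase ``telescoping these recursions.''
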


\begin{figure}[t!]
	\centering
	\begin{tikzpicture}
	\clip (-8.1,1.6) rectangle (8.4,-1.6);
	\node[scale=0.645]{
	\begin{tikzcd}[row sep={20pt,between origins}, execute at end picture={
			\draw[thick,inner sep=5pt] ($(\tikzcdmatrixname-2-12)+(0pt,5pt)$) rectangle ++(50pt,-70pt) node[midway]       {$\SCS_{n,k}$};
			\draw[thick,inner sep=5pt] ($(\tikzcdmatrixname-2-12)+(60pt,25pt)$) rectangle ++(50pt,-70pt) node[midway] (P) {$\SCS_{n\text{-}1,k}$};
			\draw[thick,inner sep=5pt] ($(\tikzcdmatrixname-1-12)+(140pt,5pt)$) rectangle ++(50pt,-70pt) node[midway] (Q) {$\SCS_{k+1,k}$};
			\draw[thick,inner sep=5pt] ($(\tikzcdmatrixname-1-12)+(220pt,5pt)$) rectangle ++(50pt,-40pt) node[midway] (R) {$\SCS_{3,2}$};
			\draw[thick,inner sep=5pt] ($(\tikzcdmatrixname-1-12)+(280pt,5pt)$) rectangle ++(50pt,-30pt) node[midway]     {$\SCS_{2,1}$};
			\node at ($0.5*(P)+0.5*(Q)$) {$\dots$};
			\node at ($0.5*(Q)+0.5*(R)$) {$\dots$};
    	}]							% switch to recursion
		& \qwbundle\gate[wires=5]{\quad U_{n,k}\quad}  	& \qw	&&	&							& \lstick{$\ket{0}^{\otimes n-k-1}$}														& \qwbundle\qw													& \gate[wires=4]{\quad U_{n-1,k}\quad}									& \qw	&&	\\[16pt]
		& 						& \qw	&=\;&	& \lstick[wires=4]{\rotatebox{90}{$k+1$}}\hspace*{44pt}	& \lstick{$n-k\colon \ket{\ \,}$} 														& \gate[wires=4]{\parbox[c][10pt][c]{2.5cm}{\centering Shift \&\\ Cyclic Split\\[4pt] $\SCS_{n,k}$}}	 	& 													& \qw	&=\,&	\\
		&						& \qw	&&	& 							& \lstick{\raisebox{6pt}{\vdots}\hspace*{6pt}}													& 														&													& \qw	&&	\\
		& 						& \qw	&&	&							& \lstick{$n-1\colon \ket{\ \,}$}\slice[label style={inner sep=1pt,anchor=south west,xshift=-2pt,rotate=30}]{\ket{\text{slice}1}}		&														& \slice[label style={inner sep=1pt,anchor=south west,xshift=-2pt,rotate=30}]{\ket{\text{slice}3}}	& \qw	&&	\\ 
		& 						& \qw	&&	&							& \lstick{$n\colon \ket{\ \,}$}															& \slice[label style={inner sep=1pt,anchor=south west,xshift=-2pt,rotate=30}]{\ket{\text{slice}2}}		& \qw													& \qw	&&	
	\end{tikzcd}
	};
	\end{tikzpicture}\\[-2ex]
	\caption{Inductive construction of the unitaries $U_{n,k}$ from the unitaries $\SCS_{n,k}$ and $U_{n-1,k}$.\label{fig:induction}}
\end{figure}
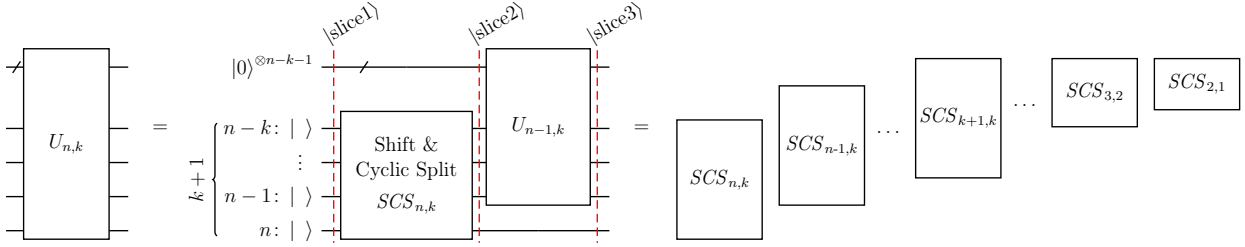

\begin{proof}
	We show by induction over $n$ that for all $\ell \leq k$ we have $U_{n,k} \ket{0}^{\otimes n-\ell}\ket{1}^{\otimes \ell} = \dicke{n}{\ell}$.

	For the base case $U_{2,2}$ we have directly by Definition~\ref{def:scs} that $\SCS_{2,1}\ket{00} = \ket{00} =: \dicke{2}{0}$, $\SCS_{2,1}\ket{01} = \tfrac{1}{2}(\ket{01} + \ket{10}) =: \dicke{2}{1}$, $\SCS_{2,1}\ket{11} = \ket{11} =: \dicke{2}{2}$
	and thus for $\ell \leq 2$ that $U_{2,2}\ket{0}^{\otimes 2-\ell}\ket{1}^{\otimes \ell} := \SCS_{2,1}\ket{0}^{\otimes 2-\ell}\ket{\otimes 1}^{\ell} = \dicke{n}{\ell}$.
	We proceed by induction, first considering the simpler step $U_{n-1,k} \rightarrow U_{n,k}$, where $k \leq n-1$, then moving on to the step $U_{k-1,k-1}\rightarrow U_{k,k}$:

	\subparagraph{Step $U_{n-1,k} \rightarrow U_{n,k}$.}
	We show $U_{n,k} \ket{0}^{\otimes n-\ell}\ket{1}^{\otimes \ell} = \dicke{n}{\ell}$ for all $\ell \leq k \leq n-1$ by analyzing the three time-slices depicted in Figure~\ref{fig:induction}.
	As input to the circuit we have a corresponding state $\ket{\mathrm{slice}1} = \ket{0}^{\otimes n-\ell}\ket{1}^{\otimes \ell}$ for some $\ell \in 0,\ldots,k$. Applying Lemma~\ref{lem:sum-form} at the end, we get 
	\begin{align*}
		\ket{\mathrm{slice}2}
		& = ( \Id^{\otimes n-k-1} \otimes \SCS_{n,k} ) \ket{0}^{\otimes n-\ell}\ket{1}^{\otimes \ell}				\\ 
		& = \ket{0}^{\otimes n-k-1} \otimes \left(	\sqrt{\tfrac{\ell}{n}} \ket{0}^{\otimes k+1-\ell}\ket{1}^{\otimes \ell}	+ \sqrt{\tfrac{n-\ell}{n}} \ket{0}^{\otimes k-\ell}\ket{1}^{\otimes \ell}\ket{0} \right)	\\
		& = \sqrt{\tfrac{\ell}{n}} \ket{0}^{\otimes (n-1)-(\ell-1)}\ket{1}^{\otimes \ell-1} \otimes \ket{1}
		  + \sqrt{\tfrac{n-\ell}{n}} \ket{0}^{\otimes (n-1)-\ell}\ket{1}^{\otimes \ell}\otimes\ket{0},			\\
		\ket{\mathrm{slice}3}
		& = \left( U_{n-1,k} \otimes \Id \right) \ket{\mathrm{slice}2}						
		  = \sqrt{\tfrac{\ell}{n}} \dicke{n-1}{\ell-1} \otimes\ket{1} + \sqrt{\tfrac{n-\ell}{n}} \dicke{n-1}{\ell} \otimes\ket{0}
		  = \dicke{n}{\ell}.
	\end{align*}
	
	\subparagraph{Step $U_{k-1,k-1} \rightarrow U_{k,k}$.}
	We show $U_{k,k} \ket{0}^{\otimes k-\ell}\ket{1}^{\otimes \ell} = \dicke{k}{\ell}$ for all $\ell \leq k$. Replacing $k$ by $k-1$ and $n$ by $k$ in the previous analysis, 
	we immediately get that for $\ell \leq k-1$ the state $\ket{\mathrm{slice}1} = \ket{0}^{\otimes k-\ell}\ket{1}^{\otimes \ell}$ 
	maps to $\ket{\mathrm{slice}3} = \dicke{k}{\ell}$. It remains to show the same for $\ell = k$:
	\begin{align*}
		\ket{\mathrm{slice}3}
		& = \left( U_{k-1,k-1} \otimes \Id \right) \cdot \SCS_{k,k-1} \ket{\mathrm{slice}1} = \left( U_{k-1,k-1} \otimes \Id \right) \cdot \SCS_{k,k-1} \ket{1}^{\otimes k}	\\ 
		& = \left( U_{k-1,k-1} \otimes \Id \right) \ket{1}^{\otimes k} = \dicke{k-1}{k-1} \otimes \ket{1}  = \ket{1}^{\otimes k} = \dicke{k}{k}.\qedhere
	\end{align*}
\end{proof}

\subsection[Explicit Construction of SCS\_n,k]{Explicit Construction of $\SCS_{n,k}$}
\label{sec:preparation:construction}

In the following, we describe a clean construction of an arbitrary Split \& Cyclic Shift unitary $\SCS_{n,k}$ in terms of $1$ two-qubit gate and $k-1$ three-qubit gates, 
each of which implements exactly one of the $k$ non-trivial mappings for $\ell \in 1,\dots,k$ given in Definition~\ref{def:scs}: 
\begin{equation}
	\ket{0}^{\otimes k+1-\ell}\ket{1}^{\otimes \ell} \rightarrow \sqrt{\tfrac{\ell}{n}} \ket{0}^{\otimes k+1-\ell}\ket{1}^{\otimes\ell} + \sqrt{\tfrac{n-\ell}{n}}  \ket{0}^{\otimes k-\ell}\ket{1}^{\otimes\ell}\ket{0}.	\label{eq:croty} 
\end{equation}
\paragraph{Building Blocks.} 
The relevant qubits in this mapping are the last ($n$th) qubit as well as the pair of qubits in which there is a change in the binary string from $0$'s to $1$'s 
(the $(n-\ell)$th and $(n-\ell+1)$th qubits). Using the notation $\ket{xy}_a$ to say that qubits $a$ and $a+1$ are in states $x$ and $y$, respectively, 
the two- and three-qubit gates are defined and easily constructed by:\\
\newcommand{\blocki}{\mathit{(i)}}
\newcommand{\blockii}[1]{\mathit{(ii)}_{#1}}
\begin{minipage}{0.65\linewidth}
	\begin{align*}
		\blocki\quad
		\ket{00}_{n-1}\phantom{\ket{0}_n}	& \rightarrow	\ket{00}_{n-1}								\\
		\ket{11}_{n-1}\phantom{\ket{0}_n}	& \rightarrow	\ket{11}_{n-1}								\\[-1ex]
		\ket{01}_{n-1}\phantom{\ket{0}_n}	& \rightarrow	\sqrt{\tfrac{1}{n}}\ket{01}_{n-1} + \sqrt{\tfrac{n-1}{n}}\ket{10}_{n-1}	\\[3ex]	
		\blockii{\ell}\quad
		\ket{00}_{n-\ell}\ket{0}_n		& \rightarrow	\ket{00}_{n-\ell}\ket{0}_n						\\
		\ket{01}_{n-\ell}\ket{0}_n		& \rightarrow	\ket{01}_{n-\ell}\ket{0}_n						\\
		\ket{00}_{n-\ell}\ket{1}_n		& \rightarrow	\ket{00}_{n-\ell}\ket{1}_n						\\
		\ket{11}_{n-\ell}\ket{1}_n		& \rightarrow	\ket{11}_{n-\ell}\ket{1}_n						\\[-1ex]
		\ket{01}_{n-\ell}\ket{1}_n		& \rightarrow	\sqrt{\tfrac{\ell}{n}} \ket{01}_{n-\ell}\ket{1}_n + \sqrt{\tfrac{n-\ell}{n}} \ket{11}_{n-\ell}\ket{0}_n	
	\end{align*}
\end{minipage}%
\begin{minipage}{0.35\linewidth}
	\vspace*{2ex}
	\begin{tikzpicture}
	\node[scale=0.7]{
	\begin{tikzcd}[row sep={24pt,between origins}]
		\lstick{\ }	  	& \qw\qwbundle	& \qw							& \qw		& \qw	\\
		\lstick{$n-1$}  	& \ctrl{1}    	& \gate{R_y(2 \cos^{-1} \surd \tfrac{1}{n})}  		& \ctrl{1}	& \qw	\\	
		\lstick{$n$}	 	& \targ{}     	& \ctrl{-1}		  				& \targ{}	& \qw	\\[7ex]	
		\lstick{\ }	  	& \qw\qwbundle	& \qw							& \qw		& \qw	\\
		\lstick{$n-\ell$}  	& \ctrl{3}	& \gate{R_y(2 \cos^{-1} \surd \tfrac{\ell}{n})}		& \ctrl{3}	& \qw	\\
		\lstick{$n-\ell+1$}  	& \qw   	& \ctrl{-1}						& \qw		& \qw	\\
		\lstick{\dots}		&		&							&		& 	\\
		\lstick{$n$}	 	& \targ{}     	& \ctrl{-2}		  				& \targ{}	& \qw	
	\end{tikzcd}
	};
	\end{tikzpicture}
\end{minipage}\\[2ex]
The two-qubit gate $\blocki$ and the $k-1$ three-qubit gates $\blockii{\ell}$ for $2\leq \ell \leq k$ are each constructed by a (two-)controlled $Y$-rotation 
$R_y\left(2\cos^{-1}\surd\tfrac{\ell}{n}\right)$ 
mapping $\ket{0} \rightarrow \surd\tfrac{\ell}{n}\ket{0} + \surd\tfrac{n-\ell}{n}\ket{1}$, conjugated with a $\CNOT$ on the last qubit $n$.
Here, we use $R_y(2\theta) = \left( \begin{smallmatrix} \cos\theta & -\sin\theta \\ \sin\theta & \cos\theta \end{smallmatrix} \right)$.

\paragraph{Putting it all together.}
Note that the states $\ket{0}^{\otimes k+1}$ and $\ket{1}^{\otimes k+1}$ remain unchanged under each of the $k$ gates $\blocki, \blockii{\ell}$. 
Furthermore, for any given $1\leq \ell^* \leq k$, there is exactly one of the $k$ gates $\blocki,\blockii{\ell}$ affecting the state $\ket{0}^{\otimes k+1-\ell^*}\ket{1}^{\otimes \ell^*}$, namely the one with matching $\ell = \ell^*$. 
It maps $\ket{01}_{n-\ell^*}\ket{1}_n \rightarrow \surd\tfrac{\ell^*}{n} \ket{01}_{n-\ell^*}\ket{1}_n + \surd\tfrac{n-\ell^*}{n} \ket{11}_{n-\ell^*}\ket{0}_n$, implementing Equation~\eqref{eq:croty}.

The resulting second term $\ket{0}^{\otimes k-\ell^*}\ket{1}^{\otimes \ell^*}\ket{0}$ remains unaffected by all gates $\blockii{\ell}$ with larger $\ell > \ell^*$.  
Hence we can build a complete $\SCS_{n,k}$ gate starting with the two-qubit gate~$\blocki$ followed the $k-1$ three-qubit gates $\blockii{\ell}$ order by increasing $\ell$. 
For an illustration of $\SCS_{5,3}, \SCS_{4,3}, \SCS_{3,2}$ and $\SCS_{2,1}$ -- together composing $U_{5,3}$ -- see Figure~\ref{fig:U53}. 
The example can also be opened and verified in Quirk~\cite{quirk}
\href{https://algassert.com/quirk#circuit=%7B%22cols%22%3A%5B%5B1%2C1%2C%22X%22%2C%22X%22%2C%22X%22%5D%2C%5B%22Chance%22%2C%22Chance%22%2C%22Chance%22%2C%22Chance%22%2C%22Chance%22%5D%2C%5B%22Chance5%22%5D%2C%5B1%2C1%2C1%2C%22%E2%80%A2%22%2C%22X%22%5D%2C%5B1%2C1%2C1%2C%22~d6b5%22%2C%22%E2%80%A2%22%5D%2C%5B1%2C1%2C1%2C%22%E2%80%A2%22%2C%22X%22%5D%2C%5B%22Chance5%22%5D%2C%5B1%2C1%2C%22%E2%80%A2%22%2C1%2C%22X%22%5D%2C%5B1%2C1%2C%22~f38g%22%2C%22%E2%80%A2%22%2C%22%E2%80%A2%22%5D%2C%5B1%2C1%2C%22%E2%80%A2%22%2C1%2C%22X%22%5D%2C%5B%22Chance5%22%5D%2C%5B1%2C%22%E2%80%A2%22%2C1%2C1%2C%22X%22%5D%2C%5B1%2C%22~dcqq%22%2C%22%E2%80%A2%22%2C1%2C%22%E2%80%A2%22%5D%2C%5B1%2C%22%E2%80%A2%22%2C1%2C1%2C%22X%22%5D%2C%5B%22Chance5%22%5D%2C%5B1%2C1%2C%22%E2%80%A2%22%2C%22X%22%5D%2C%5B1%2C1%2C%22~r6tk%22%2C%22%E2%80%A2%22%5D%2C%5B1%2C1%2C%22%E2%80%A2%22%2C%22X%22%5D%2C%5B%22Chance5%22%5D%2C%5B1%2C%22%E2%80%A2%22%2C1%2C%22X%22%5D%2C%5B1%2C%22~q81a%22%2C%22%E2%80%A2%22%2C%22%E2%80%A2%22%5D%2C%5B1%2C%22%E2%80%A2%22%2C1%2C%22X%22%5D%2C%5B%22Chance5%22%5D%2C%5B%22%E2%80%A2%22%2C1%2C1%2C%22X%22%5D%2C%5B%22~jrlp%22%2C%22%E2%80%A2%22%2C1%2C%22%E2%80%A2%22%5D%2C%5B%22%E2%80%A2%22%2C1%2C1%2C%22X%22%5D%2C%5B%22Chance5%22%5D%2C%5B1%2C%22%E2%80%A2%22%2C%22X%22%5D%2C%5B1%2C%22~tue7%22%2C%22%E2%80%A2%22%5D%2C%5B1%2C%22%E2%80%A2%22%2C%22X%22%5D%2C%5B%22Chance5%22%5D%2C%5B%22%E2%80%A2%22%2C1%2C%22X%22%5D%2C%5B%22~7h3f%22%2C%22%E2%80%A2%22%2C%22%E2%80%A2%22%5D%2C%5B%22%E2%80%A2%22%2C1%2C%22X%22%5D%2C%5B%22Chance5%22%5D%2C%5B%22%E2%80%A2%22%2C%22X%22%5D%2C%5B%22~ok6s%22%2C%22%E2%80%A2%22%5D%2C%5B%22%E2%80%A2%22%2C%22X%22%5D%2C%5B%22Chance5%22%5D%5D%2C%22gates%22%3A%5B%7B%22id%22%3A%22~ok6s%22%2C%22name%22%3A%22%E2%88%9A1%2F2%22%2C%22matrix%22%3A%22%7B%7B%E2%88%9A%C2%BD%2C-%E2%88%9A%C2%BD%7D%2C%7B%E2%88%9A%C2%BD%2C%E2%88%9A%C2%BD%7D%7D%22%7D%2C%7B%22id%22%3A%22~tue7%22%2C%22name%22%3A%22%E2%88%9A1%2F3%22%2C%22matrix%22%3A%22%7B%7B%E2%88%9A%E2%85%93%2C-%E2%88%9A%E2%85%94%7D%2C%7B%E2%88%9A%E2%85%94%2C%E2%88%9A%E2%85%93%7D%7D%22%7D%2C%7B%22id%22%3A%22~7h3f%22%2C%22name%22%3A%22%E2%88%9A2%2F3%22%2C%22matrix%22%3A%22%7B%7B%E2%88%9A%E2%85%94%2C-%E2%88%9A%E2%85%93%7D%2C%7B%E2%88%9A%E2%85%93%2C%E2%88%9A%E2%85%94%7D%7D%22%7D%2C%7B%22id%22%3A%22~r6tk%22%2C%22name%22%3A%22%E2%88%9A1%2F4%22%2C%22matrix%22%3A%22%7B%7B%C2%BD%2C-%E2%88%9A%C2%BE%7D%2C%7B%E2%88%9A%C2%BE%2C%C2%BD%7D%7D%22%7D%2C%7B%22id%22%3A%22~q81a%22%2C%22name%22%3A%22%E2%88%9A2%2F4%22%2C%22matrix%22%3A%22%7B%7B%E2%88%9A%C2%BD%2C-%E2%88%9A%C2%BD%7D%2C%7B%E2%88%9A%C2%BD%2C%E2%88%9A%C2%BD%7D%7D%22%7D%2C%7B%22id%22%3A%22~jrlp%22%2C%22name%22%3A%22%E2%88%9A3%2F4%22%2C%22matrix%22%3A%22%7B%7B%E2%88%9A%C2%BE%2C-%C2%BD%7D%2C%7B%C2%BD%2C%E2%88%9A%C2%BE%7D%7D%22%7D%2C%7B%22id%22%3A%22~d6b5%22%2C%22name%22%3A%22%E2%88%9A1%2F5%22%2C%22matrix%22%3A%22%7B%7B%E2%88%9A%E2%85%95%2C-%E2%88%9A%E2%85%98%7D%2C%7B%E2%88%9A%E2%85%98%2C%E2%88%9A%E2%85%95%7D%7D%22%7D%2C%7B%22id%22%3A%22~f38g%22%2C%22name%22%3A%22%E2%88%9A2%2F5%22%2C%22matrix%22%3A%22%7B%7B%E2%88%9A%E2%85%96%2C-%E2%88%9A%E2%85%97%7D%2C%7B%E2%88%9A%E2%85%97%2C%E2%88%9A%E2%85%96%7D%7D%22%7D%2C%7B%22id%22%3A%22~dcqq%22%2C%22name%22%3A%22%E2%88%9A3%2F5%22%2C%22matrix%22%3A%22%7B%7B%E2%88%9A%E2%85%97%2C-%E2%88%9A%E2%85%96%7D%2C%7B%E2%88%9A%E2%85%96%2C%E2%88%9A%E2%85%97%7D%7D%22%7D%2C%7B%22id%22%3A%22~it96%22%2C%22name%22%3A%22%E2%88%9A4%2F5%22%2C%22matrix%22%3A%22%7B%7B%E2%88%9A%E2%85%98%2C-%E2%88%9A%E2%85%95%7D%2C%7B%E2%88%9A%E2%85%95%2C%E2%88%9A%E2%85%98%7D%7D%22%7D%5D%7D}{\textcolor{red}{following this link}}.

\begin{figure}[t!]
	\centering
	\newcommand{\RG}[2]{\gate{\surd\tfrac{#1}{#2}}}
	\begin{tikzpicture}
	\clip (-8.2,1.4) rectangle (8.3,-1.5);
	\node[scale=0.61]{
		\begin{tikzcd}[row sep={24pt,between origins}, execute at end picture={
				\draw[blue,dotted,thick] ($0.5*(\tikzcdmatrixname-2-4)+0.5*(\tikzcdmatrixname-2-5)+(0pt,12pt)$) -- ($0.5*(\tikzcdmatrixname-5-4)+0.5*(\tikzcdmatrixname-5-5)-(0pt,12pt)$);
				\draw[blue,dotted,thick] ($0.5*(\tikzcdmatrixname-2-7)+0.5*(\tikzcdmatrixname-2-8)+(0pt,12pt)$) -- ($0.5*(\tikzcdmatrixname-5-7)+0.5*(\tikzcdmatrixname-5-8)-(0pt,12pt)$);
				\node[yshift=-15pt] at (\tikzcdmatrixname-5-6) {$\SCS_{5,3}$};
				\node[yshift=-15pt] at (\tikzcdmatrixname-5-15) {$\SCS_{4,3}$};
				\node[yshift=-15pt] at ($0.5*(\tikzcdmatrixname-5-22)+0.5*(\tikzcdmatrixname-5-23)$) {$\SCS_{3,2}$};
				\node[yshift=-15pt] at (\tikzcdmatrixname-5-27) {$\SCS_{2,1}$};
			}]
			\lstick{\ket{0}}\slice{}	& \qw		& \qw		& \qw		& \qw		& \qw		& \qw		& \qw		& \qw		& \qw\slice{}	& \qw		& \qw		& \qw		& \qw		& \qw		& \qw		& \ctrl{3}	& \RG{3}{4}	& \ctrl{3}	& \qw		& \qw		& \qw		& \ctrl{2}	& \RG{2}{3}	& \ctrl{2}	& \ctrl{1}	& \RG{1}{2}	& \ctrl{1}	& \qw\rstick[wires=5]{\ \rotatebox{90}{\large\dicke{5}{3}}}	\\	
			\lstick{\ket{0}}	     	& \qw		& \qw		& \qw		& \qw		& \qw		& \qw		& \ctrl{3}	& \RG{3}{5}	& \ctrl{3}	& \qw		& \qw		& \qw		& \ctrl{2}	& \RG{2}{4}	& \ctrl{2}	& \qw		& \control{}	& \qw\slice{}	& \ctrl{1}	& \RG{1}{3}	& \ctrl{1}	& \qw		& \control{}	& \qw		& \targ{}	& \ctrl{-1}	& \targ{}	& \qw					\\	
			\lstick{\ket{1}}	  	& \qw		& \qw		& \qw		& \ctrl{2}	& \RG{2}{5}	& \ctrl{2}	& \qw		& \control{}	& \qw		& \ctrl{1}	& \RG{1}{4}	& \ctrl{1}	& \qw		& \control{}	& \qw		& \qw		& \qw		& \qw		& \targ{}	& \ctrl{-1}	& \targ{}	& \targ{}	& \ctrl{-2}	& \targ{}	& \qw		& \qw		& \qw\slice{}	& \qw					\\	
			\lstick{\ket{1}}  		& \ctrl{1}	& \RG{1}{5}	& \ctrl{1}	& \qw		& \control{}	& \qw		& \qw		& \qw		& \qw		& \targ{}	& \ctrl{-1}	& \targ{}	& \targ{}	& \ctrl{-2}	& \targ{}	& \targ{}	& \ctrl{-3}	& \targ{}	& \qw		& \qw		& \qw		& \qw		& \qw		& \qw\slice{}	& \qw		& \qw		& \qw		& \qw					\\	
			\lstick{\ket{1}}	 	& \targ{}	& \ctrl{-1}	& \targ{}	& \targ{}	& \ctrl{-2}	& \targ{}	& \targ{}	& \ctrl{-3}	& \targ{}	& \qw		& \qw		& \qw		& \qw		& \qw		& \qw		& \qw		& \qw		& \qw		& \qw		& \qw		& \qw		& \qw		& \qw		& \qw		& \qw		& \qw		& \qw		& \qw		
		\end{tikzcd}
	};
	\end{tikzpicture}\\[-2ex]
	\caption{Preparation of the Dicke state $\dicke{5}{3}$ with $\SCS$ gates implementing a unitary $U_{5,3}$.
	$\surd\tfrac{\ell}{n}$-gates are shorthand for $Y$-Rotations $R_y\left(2\cos^{-1}\surd\tfrac{\ell}{n}\right)$, mapping $\ket{0} \rightarrow \surd\tfrac{\ell}{n}\ket{0} + \surd\tfrac{n-\ell}{n}\ket{1}$.}
	\label{fig:U53}
\end{figure}

%% Circuit size and depth
%% ----------------------
\section{Circuit Size and Depth}
\label{sec:circuit-size}

We now analyze the size and depth of our circuit construction and show how to adapt the circuit to be used on Linear Nearest Neighbor (LNN) architectures, 
where 2-qubit gates can only be implemented between neighboring qubits:

%% Main result
\begin{theorem}
	Dicke states $\dicke{n}{k}$ can be prepared with a circuit of size $\bigO(\min(k,n-k)\cdot n)$ and depth $\bigO(n)$,
	even on Linear Nearest Neighbor architectures.
	\label{thm:DS-circuit}
\end{theorem}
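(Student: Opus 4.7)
The plan is to decompose the proof into three parts corresponding to the three claims: gate count $\bigO(\min(k,n{-}k)\cdot n)$, depth $\bigO(n)$, and the LNN adaptation. The inductive decomposition of Lemma~\ref{lem:induction} reduces everything to understanding a single $\SCS_{n,k}$ block, together with how consecutive blocks interact.

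For the size bound, I would first observe that each $\SCS_{n,k}$ consists of one two-qubit gate $\blocki$ and $k-1$ three-qubit gates $\blockii{\ell}$, and each three-qubit gate decomposes into two $\CNOT$s plus one controlled $R_y$-rotation, i.e.\ $\bigO(1)$ elementary gates. Hence the explicit construction of Section~\ref{sec:preparation:construction} gives $|\SCS_{n,k}| = \bigO(k)$. Summing the two products in Lemma~\ref{lem:induction} yields
\[
\sum_{\ell=2}^{k} \bigO(\ell) \;+\; \sum_{\ell=k+1}^{n} \bigO(k) \;=\; \bigO(k^2) + \bigO(k(n-k)) \;=\; \bigO(kn).
\]
To replace $k$ by $\min(k,n-k)$, I would invoke the symmetry $X^{\otimes n}\dicke{n}{n-k}=\dicke{n}{k}$: when $n-k<k$, prepare $\dicke{n}{n-k}$ via $U_{n,n-k}$ applied to $\ket{0}^{\otimes k}\ket{1}^{\otimes n-k}$ and then flip every qubit, at a cost of $n$ extra gates.

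For depth, the naive bound is $\bigO(nk)$ because each $\SCS_{n,k}$ is serial (all its sub-gates share qubit $n$) and there are $\Theta(n)$ such blocks. The heart of the argument is a pipelining schedule. I would examine which qubits each sub-gate of $\SCS_{\ell,k}$ touches: sub-gate $j$ acts on qubits $\{\ell-j,\ell-j+1,\ell\}$, i.e.\ the anchor qubit $\ell$ and a sliding neighboring pair. Comparing $\SCS_{\ell,k}$ with $\SCS_{\ell-1,k}$, the $j$-th sub-gate of the latter conflicts only with the $(j{+}1)$-st sub-gate of the former (on the shared pair $\{\ell-j-1,\ell-j\}$), so a delay of two time-slots per block suffices. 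Telescoping, the last block $\SCS_{k+1,k}$ finishes by time $\bigO(k)+2(n-k-1)=\bigO(n)$, and the tail blocks $\SCS_{\ell,\ell-1}$ for $\ell\le k$ contribute only an additional $\bigO(k)\le\bigO(n)$ layers.

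For the LNN adaptation, the obstacle is that $\blockii{\ell}$ uses the outer $\CNOT$s between qubits $n-\ell$ and $n$, which are distance $\ell$ apart. Instead of compiling each such $\CNOT$ into a chain of $\ell$ nearest-neighbor gates (which would cost $\bigO(k^2)$ per $\SCS_{n,k}$), I would interleave nearest-neighbor $\mathit{SWAP}$s so that the anchor qubit $n$ is progressively ``walked'' upward alongside the sliding pair. Each sub-gate is then executed on three consecutive qubits, and the $\mathit{SWAP}$s absorb into constant-factor overhead per sub-gate, keeping both $|\SCS_{n,k}|=\bigO(k)$ and $\mathrm{depth}(\SCS_{n,k})=\bigO(k)$ on LNN. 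Crucially, the same pipelining analysis still applies because the qubit-support pattern is unchanged up to a constant dilation.

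The main obstacle is the depth pipelining: verifying rigorously that consecutive $\SCS$ blocks really do admit a schedule with $\bigO(1)$ amortized slack, i.e.\ that every qubit dependency across block boundaries is resolved by a fixed finite offset. Everything else (gate count, the $\min$ trick, and the LNN bookkeeping) is a straightforward summation once the per-block structure is understood.
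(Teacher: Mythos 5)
Your size bound, the $X^{\otimes n}\dicke{n}{n-k}=\dicke{n}{k}$ trick for the $\min$, and the pipelining argument for depth are essentially the paper's own proof. One small quantitative slip there: sub-gate $j'$ of $\SCS_{\ell-1,k}$ (acting on qubits $\{\ell-1-j',\ell-j',\ell-1\}$) collides with sub-gates $j'$, $j'+1$ \emph{and} $j'+2$ of $\SCS_{\ell,k}$ on the sliding pair, so the per-block offset must be three time-slots, not two (the paper runs $\blockii{k}$ of $\SCS_{n,k}$ alongside $\blockii{k-3}$ of $\SCS_{n-1,k}$, etc.). This only changes the constant and the depth remains $\bigO(n)$.

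The genuine gap is in the LNN part. Walking the anchor upward with one SWAP per sub-gate is fine \emph{within a single} $\SCS$ block, but you never say what happens to the parked anchors across blocks, and that is exactly where the work is. After $\SCS_{n,k}$, qubit $n$ sits on wire $n-k+2$ and qubits $n-k+2,\dots,n-1$ have each been pushed down one wire; in particular the next anchor, qubit $n-1$, now starts its walk from wire $n$. Inductively, the anchor of the $j$-th block starts on wire $n$ but must reach a wire near $n-k-j+3$ to meet its last sliding pair (those pair qubits sit on undisturbed wires below the pile of parked anchors), so its walk has length $\Theta(k+j)$, not $\Theta(k)$. Summing over $j$ gives $\Theta\bigl((n-k)^2\bigr)$ SWAPs — for constant $k$ this is $\Theta(n^2)$ instead of the claimed $\bigO(kn)$ — and your assertion that ``the qubit-support pattern is unchanged up to a constant dilation'' fails because the dilation grows linearly with the block index. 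The paper explicitly flags this blow-up and avoids it with extra bookkeeping that your sketch is missing: it partitions the $\SCS$ unitaries into $\lfloor n/k\rfloor-1$ groups of $k$ consecutive blocks, lets each anchor in a group continue sifting \emph{past} the operating region up to wire $n-2k+1$, and then spends another $\bigO(k^2)$ gates and $\bigO(k)$ depth sifting everything back to its original wire before the next group begins, so that each group starts from a clean permutation. That restores $\bigO(k^2)$ gates and $\bigO(k)$ depth per group, hence $\bigO(kn)$ and $\bigO(n)$ overall; without some such resetting step your LNN construction does not meet the stated size bound.
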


\begin{figure}[b!]
	\centering
	\newcommand{\RyT}[2]{\gate{R_y(#1\frac{\theta}{#2})}}
	\begin{tikzpicture}
		\clip (-8.2,0.9) rectangle (8.3,-1.0);
	\node[scale=0.69]{
		\begin{tikzcd}[row sep={20pt,between origins}, slice style=blue, execute at end picture={
			\node[fit=(\tikzcdmatrixname-2-13)(\tikzcdmatrixname-3-14),draw,dashed,inner ysep=4pt,inner xsep=6pt,xshift=2pt] {};
			\draw[dashed]	($(\tikzcdmatrixname-1-26)+(0pt,8pt)$) -- ($(\tikzcdmatrixname-1-25)+(-8pt,8pt)$) -- ($(\tikzcdmatrixname-2-25)-(8pt,6pt)$) -- ($(\tikzcdmatrixname-2-26)-(0pt,6pt)$);
			\draw[red,semithick,->]	(\tikzcdmatrixname-5-10) edge[bend left] (\tikzcdmatrixname-2-11);
			\draw[red,semithick,->]	(\tikzcdmatrixname-5-10) edge[bend left] (\tikzcdmatrixname-2-13);
			\draw[red,semithick,->]	(\tikzcdmatrixname-5-22) edge[bend left] (\tikzcdmatrixname-1-23);
			\draw[red,semithick,->]	(\tikzcdmatrixname-5-22) edge[bend left] (\tikzcdmatrixname-1-25);
			\node at (\tikzcdmatrixname-5-10) {\huge \textcolor{red}{$\times$}};
			\node at (\tikzcdmatrixname-5-22) {\huge \textcolor{red}{$\times$}};
		}]																
		& \qw\slice[style={blue,dotted}]{}	& \qw	  	& \qw			& \qw\slice[style={blue,dotted}]{}	& \ctrl{4}	& \qw	&&& \qw		& \qw\slice[style={blue,dotted}]{}	& \qw		& \qw		& \qw		& \qw		& \qw		& \qw		& \qw		& \qw		& \qw		& \qw		& \qw		& \targ{}\slice[style={blue,dotted}]{}	& \ctrl{4}	& \targ{}	& \qw		\\	
		& \qw					& \ctrl{3}	& \gate{R_y(2\theta)}	& \ctrl{3}				& \qw		& \qw	&&& \qw		& \targ{}				& \ctrl{3}	& \targ{}	& \targ{}\gategroup[4,steps=7,style={rounded corners,draw=none,fill=blue!15,inner ysep=2pt,inner xsep=24pt,xshift=24pt,yshift=3pt},background,label style={label position=below,xshift=8pt,yshift=10pt}]{{$\mathit{CCR}_y(2\theta)$}}
																																	& \RyT{-}{2}	& \targ{}	& \RyT{}{2} 	& \targ{}	& \RyT{-}{2}	& \targ{}	& \RyT{}{2}	& \ctrl{3}	& \ctrl{-1}				& \qw		& \ctrl{-1}	& \qw		\\[-10pt]
		&					&		&			&					&		&	&=&&&&&&&&&&&&&&																																			\\[-10pt]	
		& \ctrl{1}				& \qw		& \control{}		& \qw					& \qw		& \qw	&&& \ctrl{1}	& \ctrl{-2}				& \qw		& \ctrl{-2}	& \ctrl{-2}	& \qw		& \qw		& \qw		& \ctrl{-2}	& \qw		& \qw		& \qw		& \qw		& \qw					& \qw		& \qw		& \qw		\\
		& \targ{}				& \targ{}	& \ctrl{-3}		& \targ{}				& \targ{}	& \qw	&&& \targ{}	& \qw					& \targ{}	& \qw		& \qw		& \qw		& \ctrl{-3}	& \qw		& \qw		& \qw		& \ctrl{-3}	& \qw		& \targ{}	& \qw					& \targ{}	& \qw		& \qw
		\end{tikzcd}
	};
	\end{tikzpicture}\\[-2ex]
	\caption{Implementing a $\CNOT$-conjugated two-control $\mathit{CCR_y}(2\theta)$ rotation gate (as given in a $\blockii{\ell}$ gate, see Figure~\ref{fig:U53}) with four single-qubit $R_y(\pm\tfrac{\theta}{2})$ gates and four $\CNOT$s, one of which can be cancelled by rearranging the last $\CNOT$ of the preceding $\blocki$/$\blockii{\ell-1}$ gate.}
	\label{fig:blockii}
\end{figure}

\begin{proof}[Proof (Arbitrary 2-qubit gates).]
	Note that an alternate way to prepare a Dicke state $\dicke{n}{k}$ is to prepare the Dicke state $\dicke{n}{n-k}$ followed by $X$-gates on each qubit, as $\dicke{n}{k} = X^{\otimes n}\dicke{n}{n-k}$.
	Thus we prove size $\bigO(kn)$ for Dicke states $\dicke{n}{k}$, implying size $\bigO((n-k)n)$ for $\dicke{n}{k}$, too.

	We first show that the depth of our circuit construction is linear: 
	The structure of each $\SCS_{n,k}$ implementation is a stair of 2-qubit blocks interacting with its bottom qubit $n$. These stairs can be ``pushed into each other''.
	In particular, the 3-qubit gate $\blockii{k}$ of $\SCS_{n,k}$ acts on qubits $n-k, n-k+1$ and $n$. It can therefore be run in parallel with $k^* := \lfloor \tfrac{k+1}{3}\rfloor-1$ many other 3-qubit gates,
	namely gate $\blockii{k-3}$ of $\SCS_{n-1,k}$ (acting on qubits $n-k+2,n-k+3,n-1$) as well as gates $\blockii{k-6}, \ldots, \blockii{k-3 k^*}$ of $\SCS_{n-2,k}, \ldots, \SCS_{n-k^*,k}$, respectively.
	Since we can parallelize $k^* \in \bigO(k)$ stairs, the total depth is linear in the depth of gates $\blocki,\blockii{\ell}$ (constant) and the number of gates $\blocki,\blockii{\ell}$ ($\bigO(kn)$) divided by $k^*$,
	yielding an overall depth of $\bigO(n)$. 

	In light of a possible implementation, we prove the circuit size in Theorem~\ref{thm:DS-circuit} by compiling $U_{n,k}$ down to at most $5kn + \bigO(n)$ $\CNOT$-gates and $4kn + \bigO(n)$ arbitrary precision $R_y$-gates.
	To build $U_{n,k}$ from $\SCS$ unitaries, we need a total of $n-1$ many 2-qubit gates $\blocki$ and
	$(n-k)\cdot (k-1) + \sum_{i=3}^k (i-2) = kn - \tfrac{k^2}{2} +\bigO(n)$ many 3-qubit gates $\blockii{\ell}$, see Figure~\ref{fig:U53}.
	It remains to show that $\blockii{\ell}$-gates can be implemented with 5 $\CNOT$ gates and 4 $R_y$ gates.
	We provide such an implementation in Figure~\ref{fig:blockii}: A two-controlled $\mathit{CCR_y}(2\theta)$ rotation gate is easily seen to be implemented with 4 $R_y(\pm\tfrac{\theta}{2})$ rotation gates and 
	4 $\CNOT$s, the first one of which we can cancel by rearranging the preceeding conjugating $\CNOT$ gate.
\end{proof}

\begin{proof}[Proof (LNN architectures).] 
	In order to have the same asymptotic behaviour in terms of size and depth on LNN architectures, we need to slightly adapt our circuit.
	First note any 3-qubit gate that spans neighboring triples of qubits can be implemented with a constant number of one- and two-qubit LNN gates.
	Hence, when building unitaries $\SCS_{n,k}$ from $\blockii{\ell}$-gates, we sift up qubit $n$ by always swapping it one position upwards such that it lies on wire $n-\ell+2$.

	In order to achieve size $\bigO(kn)$ and depth $\bigO(n)$ we group $\SCS$ unitaries into $\lfloor \tfrac{n}{k} \rfloor -1$ groups of $k$ consecutive unitaries, 
	see e.g. Figure~\ref{fig:lnn} where we group $\SCS_{15,5}, \SCS_{14,5}, \ldots, \SCS_{11,5}$.
	As mentioned, to implement $\SCS_{n,k}$ on wires $n,\ldots,n-k$ we sift up qubit $n$ until it reaches wire $n-k+2$. We then \emph{continue} to let it sift up to wire $n-2k+1$. 
	In parallel we implement $\SCS_{n-1,k}$ (due to the sifting up of qubit $n$ also on wires $n,\ldots, n-k$) in a similar fashion. Again we let qubit $n-1$ sift up to wire $n-2k+1$. 
	Implementing this in parallel for all $\SCS_{n,k}, \ldots, \SCS_{n-k+1,k}$ needs size $\bigO(k^2)$ and depth $\bigO(k)$ and ends
	with qubits $n-k+1, \ldots, n$ on wires $n-2k+1,\ldots,n-k$ and vice versa, see the the red-dashed slice in Figure~\ref{fig:lnn}.
	Using another $\bigO(k^2)$ gates and $\bigO(k)$ depth we let qubits $n-k+1,\ldots,n$ sift down in parallel back to their original wires, which also moves qubits $n-2k+1,\ldots,n-k$ back to their positions. 
	The same technique can be used for all blocks of $\SCS$ gates on their respective positions.

	We are left with a group of $(n \bmod k) + k < 2k$ $\SCS$ unitaries. For those, we always sift up the respective qubits up to wire $1$. Again, we need $\bigO(k^2)$ gates and a depth of order $\bigO(k)$. 
	Overall, we need $\bigO(\tfrac{n}{k}\cdot k^2) = \bigO(kn)$ gates and $\bigO(\tfrac{n}{k}\cdot k) = \bigO(n)$ depth. 
	Note that the sift-down operations (which we did not apply on the last group of qubits) are not necessary for correctness -- they are needed to prevent a blow-up from $\bigO(kn)$ to $\bigO(n^2)$ gates overall.
\end{proof}

\begin{figure}[t!]
	\centering
	\newcommand{\mg}[3]{\gate[wires=#1]{\rotatebox{90}{\large$\mathclap{\surd\tfrac{#2}{#3}}$}}}
	\newcommand{\swapg}{\gate[swap,style={draw=none}]{}}
	\newcommand{\nog}{\gate[style={draw=none}]{\hspace*{60pt}}}
	\begin{tikzpicture}
	\clip (-8.2,1.5) rectangle (8.3,-1.6);
	\node[scale=0.345]{
		\begin{tikzcd}[row sep={24pt,between origins}, column sep=8pt, execute at end picture={
				\draw[very thick,red,dashed] ($(\tikzcdmatrixname-2-40.center)+(4pt,6pt)$) -- +(0pt,-10pt) -- ($(\tikzcdmatrixname-7-36.center)+(0pt,4pt)$) -- ($(\tikzcdmatrixname-11-36)+(0pt,-6pt)$);	
			}]
			\lstick{1-5}	& \qw\qwbundle	& \qw		& \qw		& \qw		& \qw		& \qw		& \qw		& \qw		& \qw		& \qw		& \qw		& \qw		& \qw		& \qw		& \qw		& \qw		& \qw		& \qw		& \qw		& \qw		& \qw		& \qw		& \qw		& \qw		& \qw		& \qw		& \qw		& \qw		& \qw		& \qw		& \qw		& \qw		& \qw		& \qw		& \qw		& \qw		& \qw		& \qw		& \qw		& \qw		& \qw		& \qw		& \qw		& \qw		& \qw\rstick{1-5}	\\
			\lstick{6}	& \qw		& \qw		& \qw		& \qw		& \qw		& \qw		& \qw		& \qw		& \qw		& \qw		& \qw		& \qw		& \qw		& \qw		& \qw		& \qw		& \qw		& \qw		& \swapg	& \qw		& \qw		& \qw		& \qw		& \qw		& \swapg	& \qw		& \qw		& \qw		& \qw		& \qw		& \swapg	& \qw		& \qw		& \swapg	& \qw		& \qw		& \qw		& \swapg	& \qw		& \swapg	& \qw		& \qw		& \qw		& \qw		& \qw\rstick{6}		\\
			\lstick{7}	& \qw		& \qw		& \qw		& \qw		& \qw		& \qw		& \qw		& \qw		& \qw		& \qw		& \qw		& \qw		& \qw		& \qw		& \qw		& \qw		& \swapg	& \qw		& \qw		& \qw		& \qw		& \qw		& \swapg	& \qw		& \qw		& \qw		& \qw		& \qw		& \swapg	& \qw		& \qw		& \qw		& \swapg	& \qw		& \qw		& \qw		& \swapg	& \qw		& \swapg	& \qw		& \swapg	& \qw		& \qw		& \qw		& \qw\rstick{7}		\\
			\lstick{8}	& \qw		& \qw		& \qw		& \qw		& \qw		& \qw		& \qw		& \qw		& \qw		& \qw		& \qw		& \qw		& \qw		& \qw		& \swapg	& \qw		& \qw		& \qw		& \qw		& \qw		& \swapg	& \qw		& \qw		& \qw		& \qw		& \qw		& \swapg	& \qw		& \qw		& \qw		& \qw		& \swapg	& \qw		& \qw		& \qw		& \swapg	& \qw		& \swapg	& \qw		& \swapg	& \qw		& \swapg	& \qw		& \qw		& \qw\rstick{8}		\\
			\lstick{9}	& \qw		& \qw		& \qw		& \qw		& \qw		& \qw		& \qw		& \qw		& \qw		& \qw		& \qw		& \qw		& \swapg	& \qw		& \qw		& \qw		& \qw		& \qw		& \swapg	& \qw		& \qw		& \qw		& \qw		& \qw		& \swapg	& \qw		& \qw		& \qw		& \qw		& \qw		& \swapg	& \qw		& \qw		& \qw		& \swapg	& \qw		& \swapg	& \qw		& \swapg	& \qw		& \swapg	& \qw		& \swapg	& \qw		& \qw\rstick{9}		\\
			\lstick{10}	& \qw		& \qw		& \qw		& \qw		& \qw		& \qw		& \qw		& \mg{3}{5}{15}	& \qw		& \qw		& \swapg	& \qw		& \qw		& \mg{3}{5}{14}	& \qw		& \qw		& \swapg	& \qw		& \qw		& \mg{3}{5}{13}	& \qw		& \qw		& \swapg	& \qw		& \qw		& \mg{3}{5}{12}	& \qw		& \qw		& \swapg	& \qw		& \qw		& \mg{3}{5}{11}	& \qw		& \swapg	& \qw		& \swapg	& \qw		& \swapg	& \qw		& \swapg	& \qw		& \swapg	& \qw		& \swapg	& \qw\rstick{10}	\\
			\lstick{11}	& \qw		& \qw		& \qw		& \qw		& \qw		& \mg{3}{4}{15}	& \qw		& \qw		& \swapg	& \qw		& \qw		& \mg{3}{4}{14}	& \qw		& \qw		& \swapg	& \qw		& \qw		& \mg{3}{4}{13}	& \qw		& \qw		& \swapg	& \qw		& \qw		& \mg{3}{4}{12}	& \qw		& \qw		& \swapg	& \qw		& \qw		& \mg{3}{4}{11}	& \qw		& \qw		& \swapg	& \qw		& \qw		& \qw		& \swapg	& \qw		& \swapg	& \qw		& \swapg	& \qw		& \swapg	& \qw		& \qw\rstick{11}	\\
			\lstick{12}	& \qw		& \qw		& \qw		& \mg{3}{3}{15}	& \qw		& \qw		& \swapg	& \qw		& \qw		& \mg{3}{3}{14}	& \qw		& \qw		& \swapg	& \qw		& \qw		& \mg{3}{3}{13}	& \qw		& \qw		& \swapg	& \qw		& \qw		& \mg{3}{3}{12}	& \qw		& \qw		& \swapg	& \qw		& \qw		& \mg{3}{3}{11} & \qw		& \qw		& \swapg	& \qw		& \qw		& \qw		& \qw		& \qw		& \qw		& \swapg	& \qw		& \swapg	& \qw		& \swapg	& \qw		& \qw		& \qw\rstick{12}	\\
			\lstick{13}	& \qw		& \mg{3}{2}{15}	& \qw		& \qw		& \swapg	& \qw		& \qw		& \mg{3}{2}{14}	& \qw		& \qw		& \swapg	& \qw		& \qw		& \mg{3}{2}{13}	& \qw		& \qw		& \swapg	& \qw		& \qw		& \mg{3}{2}{12}	& \qw		& \qw		& \swapg	& \qw		& \qw		& \mg{3}{2}{11}	& \qw		& \qw		& \swapg	& \qw		& \qw		& \qw		& \qw		& \qw		& \qw		& \qw		& \qw		& \qw		& \swapg	& \qw		& \swapg	& \qw		& \qw		& \qw		& \qw\rstick{13}	\\		
			\lstick{14}	& \mg{2}{1}{15}	& \qw		& \swapg	& \qw		& \qw		& \mg{2}{1}{14}& \qw		& \qw		& \swapg	& \qw		& \qw		& \mg{2}{1}{13}	& \qw		& \qw		& \swapg	& \qw		& \qw		& \mg{2}{1}{12}	& \qw		& \qw		& \swapg	& \qw		& \qw		& \mg{2}{1}{11}	& \qw		& \qw		& \swapg	& \qw		& \qw		& \qw		& \qw		& \qw		& \qw		& \qw		& \qw		& \qw		& \qw		& \qw		& \qw		& \swapg	& \qw		& \qw		& \qw		& \qw		& \qw\rstick{14}	\\
			\lstick{15}	& \qw		& \qw		& \qw		& \qw		& \qw		& \qw		& \qw		& \qw		& \qw		& \qw		& \qw		& \qw		& \qw		& \qw		& \qw		& \qw		& \qw		& \qw		& \qw		& \qw		& \qw		& \qw		& \qw		& \qw		& \qw		& \qw		& \qw		& \qw		& \qw		& \qw		& \qw		& \qw		& \qw		& \qw		& \qw		& \qw		& \qw		& \qw		& \qw		& \qw		& \qw		& \qw		& \qw		& \qw		& \qw\rstick{15}	   
		\end{tikzcd}
	};
	\end{tikzpicture}\\[-2ex]
	\caption{Linear circuit depth on LLN architectures is achieved by $\lfloor\tfrac{n}{k}\rfloor$ blocks of $\bigO(k)$ $\SCS$ unitaries which can be implemented using $\bigO(k^2)$ gates and $\bigO(k)$ depth 
	(here: $n=15, k=5$).}
	\label{fig:lnn}
\end{figure}

\section{Symmetric Pure States and Quantum Compression}
\label{sec:superposition-compression}

Our inductive approach yields (for $k=n$) a unitary $U_{n,n}$ which -- with $\bigO(n^2)$ gates and $\bigO(n)$ depth -- 
can be used to prepare any Dicke state $\dicke{n}{\ell},\ 1\leq \ell \leq n$ for the respective input $\ket{0}^{\otimes n-\ell}\ket{1}^{\otimes \ell}$.
Therefore, every superposition of these input states leads to a superposition of Dicke states.
In the following, we show how this can be used to 
(i) prepare arbitrary symmetric pure $n$-qubit states in linear depth $\bigO(n)$, and to
(ii) compress symmetric pure $n$-qubit states into $\lceil \log(n+1)\rceil$ qubits in quasilinear depth $\tilde{\bigO}(n)$ using the reverse unitary~$U_{n,n}^{\dagger}$.

\subsection{Symmetric Pure States}

As the $n+1$ different Dicke states $\dicke{n}{\ell}$ form an orthonormal basis of the fully symmetric subspace of all pure $n$-qubit states,
every symmetric pure state can be expanded in terms of Dicke states~\cite{Bastin2009}, i.e. in the form
\[ \sum_{\ell} e^{i \phi_\ell} \alpha_{\ell} \dicke{n}{\ell} \]
with magnitudes $\alpha_{\ell} \in [0,1]$, $\alpha_0^2 + \ldots + \alpha_n^2 = 1$ and phases $\phi_{\ell} \in [0, 2\pi)$, $\phi_0 = 0$.
We show:

\begin{theorem}
	Every symmetric pure $n$-qubit state can be prepared with a circuit of size $\bigO(n^2)$ and depth $\bigO(n)$, 
	even on Linear Nearest Neighbor architectures.
	\label{thm:sym-states}
\end{theorem}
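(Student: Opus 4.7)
The plan is to factor the preparation as: first build a ``staircase'' superposition, then apply the unitary $U_{n,n}$ of Lemma~\ref{lem:induction}, which converts each staircase term to the corresponding Dicke state. By Definition~\ref{def:unitary}, $U_{n,n}\ket{0}^{\otimes n-\ell}\ket{1}^{\otimes\ell}=\dicke{n}{\ell}$ for every $0\le\ell\le n$, so by linearity
\[
    U_{n,n}\Bigl(\sum_{\ell=0}^{n} e^{i\phi_\ell}\alpha_\ell\,\ket{0}^{\otimes n-\ell}\ket{1}^{\otimes\ell}\Bigr) \;=\; \sum_{\ell=0}^{n} e^{i\phi_\ell}\alpha_\ell\,\dicke{n}{\ell}.
\]
Since Theorem~\ref{thm:DS-circuit} already realizes $U_{n,n}$ with $\bigO(n^2)$ gates and $\bigO(n)$ depth on LNN, it suffices to prepare the staircase state $\ket{\phi}:=\sum_{\ell} e^{i\phi_\ell}\alpha_\ell\,\ket{0}^{\otimes n-\ell}\ket{1}^{\otimes\ell}$ with $\bigO(n)$ gates and $\bigO(n)$ depth using only nearest-neighbor interactions.

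For this I would use a sequential cascade of $n$ gates acting on adjacent qubit pairs. Starting from $\ket{0}^{\otimes n}$, the invariant maintained after step $k$ is
\[
    \sum_{\ell<k} e^{i\phi_\ell}\alpha_\ell\,\ket{0}^{\otimes n-\ell}\ket{1}^{\otimes\ell} \;+\; s_k\,e^{i\theta_k}\,\ket{0}^{\otimes n-k}\ket{1}^{\otimes k},
\]
with residual magnitude $s_k:=\sqrt{\sum_{\ell\ge k}\alpha_\ell^2}$ and accumulated phase $\theta_k$ (initially $s_0=1,\theta_0=0$). Step $k$ applies a single-qubit $U(2)$ on qubit $n-k$, conditioned (for $k\ge 1$) on qubit $n-k+1=\ket{1}$, that sends $\ket{0}$ to $\tfrac{1}{s_k}\bigl(e^{i(\phi_k-\theta_k)}\alpha_k\,\ket{0} + s_{k+1}\,e^{i(\theta_{k+1}-\theta_k)}\,\ket{1}\bigr)$. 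A single-neighbor control suffices because in every already-finalized staircase term $\ket{0}^{\otimes n-\ell}\ket{1}^{\otimes\ell}$ with $\ell<k$ the qubit $n-k+1$ equals $\ket{0}$, while in the current ``live'' term it equals $\ket{1}$; hence the control cleanly isolates the term to be split and leaves the finalized ones untouched.

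Each such controlled $U(2)$ decomposes into a constant number of single-qubit $R_y, R_z$ rotations and $\CNOT$s on adjacent qubits, so the cascade has $\bigO(n)$ gates, $\bigO(n)$ depth, and is automatically LNN. Composing it with the LNN implementation of $U_{n,n}$ from Theorem~\ref{thm:DS-circuit} yields the claimed $\bigO(n^2)$ size and $\bigO(n)$ depth, even on Linear Nearest Neighbor architectures. The main subtlety is the phase bookkeeping of $\theta_k$ and verifying that a single $U(2)$ per cascade step realizes both the magnitude split and the required phase adjustment simultaneously; correctness of the invariant then follows by a routine induction on $k$, and linearity of $U_{n,n}$ finishes the argument.
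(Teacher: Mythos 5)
Your proposal is correct and follows essentially the same route as the paper: prepare the ``staircase'' superposition $\sum_\ell e^{i\phi_\ell}\alpha_\ell\ket{0}^{\otimes n-\ell}\ket{1}^{\otimes\ell}$ with a linear cascade of nearest-neighbor controlled rotations (your $\alpha_k/s_k$ is exactly the paper's $\beta_k$), then apply $U_{n,n}$ by linearity. The only cosmetic difference is that you fold the phase adjustments into the controlled $U(2)$ gates of the cascade, whereas the paper applies the magnitude rotations first and adds the phases in a final layer of single-qubit $R_{\psi_\ell}$ gates; both are valid.
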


\begin{proof}
	By Theorem~\ref{thm:DS-circuit}, given as input the state $\sum_{\ell} e^{i \phi_\ell} \alpha_{\ell} \ket{0}^{\otimes n-\ell}\ket{1}^{\otimes \ell}$,
	the unitary $U_{n,n}$ prepares $\sum_{\ell} e^{i \phi_\ell} \alpha_{\ell} \dicke{n}{\ell}$ on LNN architectures using $\bigO(n^2)$ gates and $\bigO(n)$ depth.
	We prove that this input state can be constructed in linear depth and size.
	
	To this end, we define magnitudes $\beta_{\ell}:= \alpha_{\ell} (1-\alpha_0^2-\ldots-\alpha_{\ell-1})^{-1/2}$ and angles $\psi_0 := 0$, $\psi_{\ell} := \phi_{\ell} - \phi_{\ell-1}$.
	The original values $\alpha, \phi$ relate to the parameters $\beta, \psi$ as  
	$\phi_{\ell} = \sum_{k=0}^{\ell} \psi_{k}$ and $\alpha_{\ell} = \beta_{\ell} \cdot \prod_{k=0}^{\ell-1} \sqrt{1-\beta_k^2}$.
	As already introduced in Section~\ref{sec:circuit-size}, we use $Y$-rotation gates $R_y(2\cos^{-1} \beta)$ to map $\ket{0} \rightarrow \beta \ket{0} + \sqrt{1-\beta^2}\ket{1}$.
	Additionally, we use phase shift gates $R_{\psi} = \left(\begin{smallmatrix} 1&0\\ 0& e^{i\psi} \end{smallmatrix}\right)$ to map $\ket{1} \rightarrow  e^{i \psi} \ket{1}$.

	We start with a rotation $R_y(2\cos^{-1} \beta_0)$ on the $n$-th qubit. This is followed by a linear-depth stair of controlled $R_y(2\cos^{-1} \beta_{\ell})$-rotations
	on the $(n-\ell)$th qubit, controlled by the previous qubit $n-\ell+1$ being in state $\ket{1}$, as shown in Figure~\ref{fig:superposition-compression}~(left).	
	Finally, we add the correct phases using a layer of $R_{\psi_{\ell}}$-phase shifts on respective qubits $n-\ell+1$, yielding the desired state:
	\begin{align*}
		\ket{0}^{\otimes n}	
		& \xrightarrow[\hspace*{8ex}]{\beta_0}				\alpha_0 \ket{0}^{\otimes n} + \sqrt{1-\alpha_0^2} \ket{0}^{\otimes n-1}\ket{1}									\\
		& \xrightarrow[\hspace*{8ex}]{\beta_1}				\alpha_0 \ket{0}^{\otimes n} + \alpha_1 \ket{0}^{\otimes n-1}\ket{1} + \sqrt{1-\alpha_0^2-\alpha_1^2} \ket{0}^{\otimes n-2}\ket{11}		\\
		& \xrightarrow[\hspace*{8ex}]{\beta_2, \ldots, \beta_{n-1}}	\sum\nolimits_{\ell} \alpha_{\ell} \ket{0}^{\otimes n-\ell}\ket{1}^{\otimes \ell}								
		\xrightarrow[\hspace*{8ex}]{\psi_1, \ldots, \psi_n}		\sum\nolimits_{\ell} e^{i\phi_{\ell}} \alpha_{\ell} \ket{0}^{\otimes n-\ell}\ket{1}^{\otimes \ell}. 	\qedhere
	\end{align*}
\end{proof}

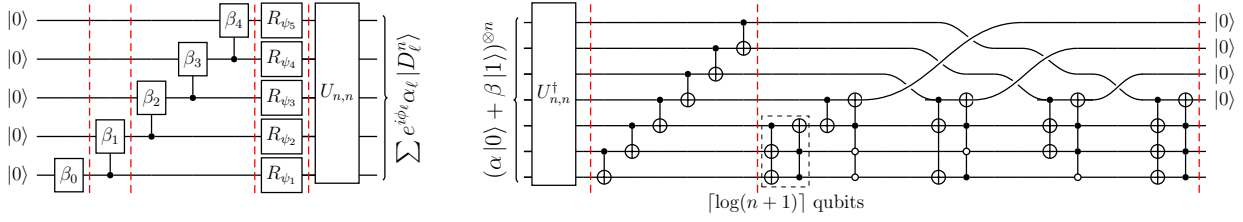
\begin{figure}[t!]
	\centering
	\newcommand{\nogate}{\gate[style={draw=none}]{\phantom{U}}}
	\begin{tikzpicture}
	\clip (-8.2,1.4) rectangle (8.3,-1.5);
	\node[scale=0.61]{
		\begin{tikzcd}[row sep={24pt,between origins}]
			\lstick{\ket{0}}	& \qw			& \qw			& \qw			& \qw			& \gate{\beta_4}	& \gate{R_{\psi_5}}\slice{}	& \gate[wires=5]{U_{n,n}}	& \qw \rstick[wires=5]{\rotatebox{90}{\large $\sum e^{i \phi_{\ell}} \alpha_{\ell} \dicke{n}{\ell}$}} \\		
			\lstick{\ket{0}}	& \qw			& \qw			& \qw			& \gate{\beta_3}	& \ctrl{-1}		& \gate{R_{\psi_4}}		& \qw				& \qw \\		
			\lstick{\ket{0}}	& \qw			& \qw\slice{}		& \gate{\beta_2}	& \ctrl{-1}		& \qw\slice{}		& \gate{R_{\psi_3}}		& \qw				& \qw \\		
			\lstick{\ket{0}}	& \qw\slice{}		& \gate{\beta_1}	& \ctrl{-1}		& \qw			& \qw			& \gate{R_{\psi_2}}		& \qw				& \qw \\		
			\lstick{\ket{0}}	& \gate{\beta_0}	& \ctrl{-1}		& \qw			& \qw			& \qw			& \gate{R_{\psi_1}}		& \qw				& \qw  
		\end{tikzcd}
		\qquad
		\begin{tikzcd}[row sep={16pt,between origins}, column sep=8pt, execute at end picture={
					\node[fit=(\tikzcdmatrixname-5-10)(\tikzcdmatrixname-7-11),draw,dashed,inner sep=4pt,label={[yshift=-0cm]below:$\lceil\log(n+1)\rceil$ qubits}] {};
				%% first sifting up
				\draw[thick] (\tikzcdmatrixname-3-14) edge[in=180, out=0] (\tikzcdmatrixname-4-16);
				\draw[thick] (\tikzcdmatrixname-2-15) edge[in=180, out=0] (\tikzcdmatrixname-3-17);
				\draw[thick] (\tikzcdmatrixname-1-16) edge[in=180, out=0] (\tikzcdmatrixname-2-18);
				\draw[line width=4pt,draw=white] (\tikzcdmatrixname-4-13) edge[in=180, out=0] (\tikzcdmatrixname-1-19);
				\draw[thick] (\tikzcdmatrixname-4-13) edge[in=180, out=0] (\tikzcdmatrixname-1-19);
				%% second sifting up
				\draw[thick] (\tikzcdmatrixname-3-18) edge[in=180, out=0] (\tikzcdmatrixname-4-20);
				\draw[thick] (\tikzcdmatrixname-2-19) edge[in=180, out=0] (\tikzcdmatrixname-3-21);
				\draw[line width=4pt,draw=white] (\tikzcdmatrixname-4-17) edge[in=180, out=0] (\tikzcdmatrixname-2-21);
				\draw[thick] (\tikzcdmatrixname-4-17) edge[in=180, out=0] (\tikzcdmatrixname-2-21);
			}]
			\lstick[wires=7]{\rotatebox{90}{\large $(\alpha\ket{0}+\beta\ket{1})^{\otimes n}$}}
			& \gate[wires=7]{U_{n,n}^{\dagger}}	& \qw\slice{}	& \qw		& \qw		& \qw		& \qw		& \qw		& \ctrl{1}	& \qw		& \qw		& \qw		& \qw		& \qw		& \qw		& \qw 		& 		&		& 		& \qw		& \qw		& \qw					& \qw		& \qw\slice{}	& \qw\rstick{\ket{0}}	\\
			& \qw					& \qw		& \qw		& \qw		& \qw		& \qw		& \ctrl{1}	& \targ{}	& \qw		& \qw		& \qw		& \qw		& \qw		& \qw		& 		& 		&		& \qw		& 		& 		& \qw					& \qw		& \qw		& \qw\rstick{\ket{0}}	\\
			& \qw					& \qw		& \qw		& \qw		& \qw		& \ctrl{1}	& \targ{}	& \qw\slice{}	& \qw		& \qw		& \qw		& \qw		& \qw		& 		& 		& 		& \qw		& 		& 		& 		& \gate[swap,style={draw=none}]{}	& \qw		& \qw		& \qw\rstick{\ket{0}}	\\
			& \qw					& \qw		& \qw		& \qw		& \ctrl{1}	& \targ{}	& \qw		& \qw		& \qw		& \qw		& \ctrl{1}	& \targ{}	& \nogate	& \nogate	& \ctrl{3}	& \targ{}	& \nogate	& \nogate	& \ctrl{2}	& \targ{}	& \qw					& \ctrl{3}	& \targ{}	& \qw\rstick{\ket{0}}	\\
			& \qw					& \qw		& \qw		& \ctrl{1}	& \targ{}	& \qw		& \qw		& \qw		& \ctrl{2}	& \targ{}	& \targ{}	& \ctrl{-1}	& \qw		& \qw 		& \targ{}	& \ctrl{-1}	& \qw		& \qw		& \targ{}	& \ctrl{-1}	& \qw					& \targ{}	& \ctrl{-1}	& \qw			\\
			& \qw					& \qw		& \ctrl{1}	& \targ{}	& \qw		& \qw		& \qw		& \qw		& \targ{}	& \ctrl{-1}	& \qw		& \octrl{-1}	& \qw		& \qw		& \qw		& \octrl{-1}	& \qw		& \qw		& \targ{}	& \ctrl{-1}	& \qw					& \targ{}	& \ctrl{-1}	& \qw			\\
			& \qw					& \qw		& \targ{}	& \qw		& \qw		& \qw		& \qw		& \qw		& \targ{}	& \ctrl{-1}	& \qw		& \octrl{-1}	& \qw		& \qw		& \targ{}	& \ctrl{-1}	& \qw		& \qw		& \qw		& \octrl{-1}	& \qw					& \targ{}	& \ctrl{-1}	& \qw		 	 
		\end{tikzcd}
	};		
	\end{tikzpicture}\\[-2ex]
	\caption{(left) Arbitrary superposition of Dicke states, using $Y$-rotations $R_y(2\cos^{-1} \beta_{\ell})$ with $\beta_{\ell} = \surd \tfrac{\alpha_{\ell}^2}{1-\alpha_0^2-\ldots-\alpha_{\ell-1}^2}$ and phase-shift gates $R_{\psi_\ell}$ with $\psi_{\ell} = \phi_{\ell}-\phi_{\ell-1}$, followed by unitary $U_{n,n}$. 
	(right) Efficient compression of $n$ identical qubits into $\lceil \log(n+1)\rceil$ qubits using unitary $U_{n,n}^{\dagger}$.}
	\label{fig:superposition-compression}
\end{figure}

\subsection{Quantum Compression}

As symmetric pure states live in the $(n+1)$-dimensional symmetric subspace of the full Hilbert space, they can be described with exponentially 
fewer dimensions than general multi-qubit states. This is the idea behind the quantum Schur-Weyl transform~\cite{Bacon2006}, which separates 
the permutation information from the angular momentum information of a state. Applied to a symmetric pure state, it will compress the angular momentum information
into only $\lceil \log(n+1) \rceil$ qubits, while the rest of the qubits (the trivial permutation information) can be discarded without loss of information.    

A previous approach to implement this transform for symmetric states~\cite{Plesch2010} gave a high-level description of a circuit of size and depth $\Theta(n^2)$ that needs no ancillas. 
The major part is a transformation of Dicke states $\dicke{n}{\ell}$ to a one-hot encoding $\ket{0}^{\otimes \ell-1}\ket{1}\ket{0}^{\otimes n-\ell}$ of their Hamming weight $\ell$.
Substituting this part with our unitary in reverse, $U_{n,n}^{\dagger}$, improves the overall circuit depth to quasilinear:\footnote{The referenced paper~\cite{Plesch2010}
	provides no compilation down to standard gates and no analysis of the depth of the circuit. The latter is found together with a step-by-step comparison of our
approach in Appendix~\ref{app:comparison}.}

\begin{theorem}
	\label{thm:compression}
	Every symmetric pure $n$-qubit state can be compressed into $\lceil \log(n+1)\rceil$ qubits with a circuit of size $\bigO(n^2)$ and depth $\tilde{\bigO}(n)$,
	even on Linear Nearest Neighbor architectures.
\end{theorem}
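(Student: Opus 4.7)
The plan is to run the symmetric-state preparation of Theorem~\ref{thm:sym-states} in reverse, replacing the quadratic-depth ``Dicke-to-one-hot'' transformation of Plesch et al.~\cite{Plesch2010} by the much cheaper $U_{n,n}^{\dagger}$. Concretely, I would apply $U_{n,n}^{\dagger}$ to the input $\sum_\ell e^{i\phi_\ell}\alpha_\ell \dicke{n}{\ell}$; by linearity and Definition~\ref{def:unitary}, the output is $\sum_\ell e^{i\phi_\ell}\alpha_\ell \ket{0}^{\otimes n-\ell}\ket{1}^{\otimes \ell}$, with size $\bigO(n^2)$ and depth $\bigO(n)$ on LNN inherited from Theorem~\ref{thm:DS-circuit} by simply reversing the gate order.

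After $U_{n,n}^{\dagger}$ the state is supported entirely on the $(n+1)$-dimensional subspace spanned by $\{\ket{0}^{\otimes n-\ell}\ket{1}^{\otimes \ell}\}_{\ell=0}^{n}$, so no information is lost by classical-reversibly encoding the $n+1$ labels into a $\lceil\log(n+1)\rceil$-qubit binary register while uncomputing the other $n-\lceil\log(n+1)\rceil$ qubits to $\ket{0}$. I would build this encoder in two phases: (a) a nearest-neighbor $\CNOT$ cascade converting ``$0\ldots01\ldots1$'' into a one-hot encoding of $\ell$ (all-zeros if $\ell=0$), using $\bigO(n)$ gates and $\bigO(n)$ depth; and (b) a balanced binary-merge tree that funnels the one-hot bit into the output register by repeatedly combining two adjacent blocks into one block carrying an extra ``which-half'' bit. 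On LNN this uses $\bigO(n\log n)$ gates and $\mathrm{polylog}(n)$ depth, since every merge at level $i$ acts on $\bigO(i)$ adjacent qubits and the $\bigO(n/2^i)$ blocks at that level are spatially disjoint. Summed with $U_{n,n}^{\dagger}$, the totals are $\bigO(n^2)$ gates and $\bigO(n)+\mathrm{polylog}(n)=\tilde{\bigO}(n)$ depth, as required.

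The main obstacle is phase~(b) under the LNN restriction: the one-hot bit may sit at any of $n$ positions, yet the output register must end up at a fixed location holding only $\lceil\log(n+1)\rceil$ qubits, and the naive ``route the hot bit across the line into a binary encoder'' approach blows the depth up well beyond $\tilde{\bigO}(n)$ because of $\Theta(n)$-length swap chains. The merge-tree circumvents this by keeping every gate at each level strictly between spatially adjacent sub-blocks, so partial encodings grow by only one qubit per level and no information ever has to traverse a distance larger than the current block width. Correctness of the full superposition then follows from a level-by-level induction: each elementary merge is a permutation of basis vectors restricted to the actually populated subspace, so the amplitudes and phases in $\sum_\ell e^{i\phi_\ell}\alpha_\ell(\cdot)$ are preserved, and every merge leaves the ``inactive'' half of its block in $\ket{0}$ so that after all $\lceil\log(n+1)\rceil$ levels the non-encoding qubits can be discarded without loss.
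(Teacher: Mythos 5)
Your overall architecture matches the paper's: apply $U_{n,n}^{\dagger}$ to get $\sum_\ell e^{i\phi_\ell}\alpha_\ell\ket{0}^{\otimes n-\ell}\ket{1}^{\otimes\ell}$, convert to a one-hot encoding with a nearest-neighbor $\CNOT$ cascade, then convert one-hot to binary. The first two stages are identical to the paper's proof and are fine. Where you diverge is the final stage: the paper processes the one-hot positions sequentially for increasing $\ell$ (fan-out $\CNOT$s into the $\lceil\log(n+1)\rceil$-qubit register, then a multi-controlled Toffoli to uncompute the hot bit, with SWAP-based sifting for LNN, costing $\bigO(n^2)$ gates and $\bigO(n\log^2 n)$ extra depth), whereas you propose a balanced merge tree. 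That is a legitimate alternative and, once correctly implemented, would actually be somewhat more gate-efficient for this stage ($\bigO(n\log n)$ versus the paper's $\bigO(n^2)$ including sifting).

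However, your justification of the merge tree's LNN cost contains a genuine error. At level $i$ of the tree, the two registers being merged each carry $\bigO(i)$ qubits, but they are \emph{not} adjacent on the line: each sits inside a block spanning $2^i$ original wires, so the two registers are separated by $\Theta(2^i)$ intervening (zeroed) qubits. Bringing them together costs $\Theta(2^i)$ SWAP depth per level, and your claim of $\mathrm{polylog}(n)$ depth for phase (b) on LNN is in fact impossible for any construction: a light-cone argument shows that the fixed output register must be influenced by the hot bit sitting $\Theta(n)$ wires away, forcing depth $\Omega(n)$ on a nearest-neighbor architecture. This does not sink the theorem --- summing the routing cost $\sum_i \Theta(2^i)$ over levels gives $\bigO(n)$ depth and $\sum_i (n/2^i)\cdot\bigO(2^i+i)=\bigO(n\log n)$ gates for phase (b), which still fits inside $\tilde{\bigO}(n)$ depth and $\bigO(n^2)$ size --- but the stated reason for the depth bound is wrong and must be replaced by an explicit accounting of the intra-merge routing. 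You should also spell out the elementary merge as a reversible map on the populated subspace (flag $f=f_A\oplus f_B$, which-half bit $=f_B$, controlled-SWAP of the position register, and uncomputation of the inactive half), since injectivity on that subspace is what licenses extending it to a unitary.
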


We illustrate our approach with a particular interesting symmetric pure state, the separable state 
$(\alpha\ket{0} + \beta\ket{1})^{\otimes n} = \sum \alpha^{n-\ell}\beta^{\ell} \tbinom{n}{\ell}^{1/2} \dicke{n}{\ell}$,
whose compression has been implemented experimentally for $n=3$~\cite{Rozema2014}.
An implementation of our approach for $n=5$ qubits in Quirk~\cite{quirk} can be found 
\href{https://algassert.com/quirk#circuit=%7B%22cols%22%3A%5B%5B%22X%5Et%22%2C%22X%5Et%22%2C%22X%5Et%22%2C%22X%5Et%22%2C%22X%5Et%22%5D%2C%5B%22Chance%22%2C%22Chance%22%2C%22Chance%22%2C%22Chance%22%2C%22Chance%22%5D%2C%5B%22Bloch%22%2C%22Bloch%22%2C%22Bloch%22%2C%22Bloch%22%2C%22Bloch%22%5D%2C%5B%22%E2%80%A2%22%2C%22X%22%5D%2C%5B%22~ok6s%22%2C%22%E2%80%A2%22%5D%2C%5B%22%E2%80%A2%22%2C%22X%22%5D%2C%5B%22%E2%80%A2%22%2C1%2C%22X%22%5D%2C%5B%22~7h3f%22%2C%22%E2%80%A2%22%2C%22%E2%80%A2%22%5D%2C%5B%22%E2%80%A2%22%2C1%2C%22X%22%5D%2C%5B1%2C%22%E2%80%A2%22%2C%22X%22%5D%2C%5B1%2C%22~tue7%22%2C%22%E2%80%A2%22%5D%2C%5B1%2C%22%E2%80%A2%22%2C%22X%22%5D%2C%5B%22%E2%80%A2%22%2C1%2C1%2C%22X%22%5D%2C%5B%22~jrlp%22%2C%22%E2%80%A2%22%2C1%2C%22%E2%80%A2%22%5D%2C%5B%22%E2%80%A2%22%2C1%2C1%2C%22X%22%5D%2C%5B1%2C%22%E2%80%A2%22%2C1%2C%22X%22%5D%2C%5B1%2C%22~q81a%22%2C%22%E2%80%A2%22%2C%22%E2%80%A2%22%5D%2C%5B1%2C%22%E2%80%A2%22%2C1%2C%22X%22%5D%2C%5B1%2C1%2C%22%E2%80%A2%22%2C%22X%22%5D%2C%5B1%2C1%2C%22~r6tk%22%2C%22%E2%80%A2%22%5D%2C%5B1%2C1%2C%22%E2%80%A2%22%2C%22X%22%5D%2C%5B%22%E2%80%A2%22%2C1%2C1%2C1%2C%22X%22%5D%2C%5B%22~it96%22%2C%22%E2%80%A2%22%2C1%2C1%2C%22%E2%80%A2%22%5D%2C%5B%22%E2%80%A2%22%2C1%2C1%2C1%2C%22X%22%5D%2C%5B1%2C%22%E2%80%A2%22%2C1%2C1%2C%22X%22%5D%2C%5B1%2C%22~dcqq%22%2C%22%E2%80%A2%22%2C1%2C%22%E2%80%A2%22%5D%2C%5B1%2C%22%E2%80%A2%22%2C1%2C1%2C%22X%22%5D%2C%5B1%2C1%2C%22%E2%80%A2%22%2C1%2C%22X%22%5D%2C%5B1%2C1%2C%22~f38g%22%2C%22%E2%80%A2%22%2C%22%E2%80%A2%22%5D%2C%5B1%2C1%2C%22%E2%80%A2%22%2C1%2C%22X%22%5D%2C%5B1%2C1%2C1%2C%22%E2%80%A2%22%2C%22X%22%5D%2C%5B1%2C1%2C1%2C%22~d6b5%22%2C%22%E2%80%A2%22%5D%2C%5B1%2C1%2C1%2C%22%E2%80%A2%22%2C%22X%22%5D%2C%5B%22Chance%22%2C%22Chance%22%2C%22Chance%22%2C%22Chance%22%2C%22Chance%22%5D%2C%5B1%2C1%2C1%2C%22%E2%80%A2%22%2C%22X%22%5D%2C%5B1%2C1%2C%22%E2%80%A2%22%2C%22X%22%5D%2C%5B1%2C%22%E2%80%A2%22%2C%22X%22%5D%2C%5B%22%E2%80%A2%22%2C%22X%22%5D%2C%5B%22Chance%22%2C%22Chance%22%2C%22Chance%22%2C%22Chance%22%2C%22Chance%22%5D%2C%5B1%2C1%2C%22%E2%80%A2%22%2C%22X%22%2C%22X%22%5D%2C%5B1%2C1%2C%22X%22%2C%22%E2%80%A2%22%2C%22%E2%80%A2%22%5D%2C%5B1%2C%22%E2%80%A2%22%2C%22X%22%5D%2C%5B1%2C%22X%22%2C%22%E2%80%A2%22%2C%22%E2%97%A6%22%2C%22%E2%97%A6%22%5D%2C%5B%22%E2%80%A2%22%2C1%2C%22X%22%2C1%2C%22X%22%5D%2C%5B%22X%22%2C1%2C%22%E2%80%A2%22%2C%22%E2%97%A6%22%2C%22%E2%80%A2%22%5D%5D%2C%22gates%22%3A%5B%7B%22id%22%3A%22~ok6s%22%2C%22name%22%3A%22-%E2%88%9A1%2F2%22%2C%22matrix%22%3A%22%7B%7B%E2%88%9A%C2%BD%2C%E2%88%9A%C2%BD%7D%2C%7B-%E2%88%9A%C2%BD%2C%E2%88%9A%C2%BD%7D%7D%22%7D%2C%7B%22id%22%3A%22~tue7%22%2C%22name%22%3A%22-%E2%88%9A1%2F3%22%2C%22matrix%22%3A%22%7B%7B%E2%88%9A%E2%85%93%2C%E2%88%9A%E2%85%94%7D%2C%7B-%E2%88%9A%E2%85%94%2C%E2%88%9A%E2%85%93%7D%7D%22%7D%2C%7B%22id%22%3A%22~7h3f%22%2C%22name%22%3A%22-%E2%88%9A2%2F3%22%2C%22matrix%22%3A%22%7B%7B%E2%88%9A%E2%85%94%2C%E2%88%9A%E2%85%93%7D%2C%7B-%E2%88%9A%E2%85%93%2C%E2%88%9A%E2%85%94%7D%7D%22%7D%2C%7B%22id%22%3A%22~r6tk%22%2C%22name%22%3A%22-%E2%88%9A1%2F4%22%2C%22matrix%22%3A%22%7B%7B%C2%BD%2C%E2%88%9A%C2%BE%7D%2C%7B-%E2%88%9A%C2%BE%2C%C2%BD%7D%7D%22%7D%2C%7B%22id%22%3A%22~q81a%22%2C%22name%22%3A%22-%E2%88%9A2%2F4%22%2C%22matrix%22%3A%22%7B%7B%E2%88%9A%C2%BD%2C%E2%88%9A%C2%BD%7D%2C%7B-%E2%88%9A%C2%BD%2C%E2%88%9A%C2%BD%7D%7D%22%7D%2C%7B%22id%22%3A%22~jrlp%22%2C%22name%22%3A%22-%E2%88%9A3%2F4%22%2C%22matrix%22%3A%22%7B%7B%E2%88%9A%C2%BE%2C%C2%BD%7D%2C%7B-%C2%BD%2C%E2%88%9A%C2%BE%7D%7D%22%7D%2C%7B%22id%22%3A%22~d6b5%22%2C%22name%22%3A%22-%E2%88%9A1%2F5%22%2C%22matrix%22%3A%22%7B%7B%E2%88%9A%E2%85%95%2C%E2%88%9A%E2%85%98%7D%2C%7B-%E2%88%9A%E2%85%98%2C%E2%88%9A%E2%85%95%7D%7D%22%7D%2C%7B%22id%22%3A%22~f38g%22%2C%22name%22%3A%22-%E2%88%9A2%2F5%22%2C%22matrix%22%3A%22%7B%7B%E2%88%9A%E2%85%96%2C%E2%88%9A%E2%85%97%7D%2C%7B-%E2%88%9A%E2%85%97%2C%E2%88%9A%E2%85%96%7D%7D%22%7D%2C%7B%22id%22%3A%22~dcqq%22%2C%22name%22%3A%22-%E2%88%9A3%2F5%22%2C%22matrix%22%3A%22%7B%7B%E2%88%9A%E2%85%97%2C%E2%88%9A%E2%85%96%7D%2C%7B-%E2%88%9A%E2%85%96%2C%E2%88%9A%E2%85%97%7D%7D%22%7D%2C%7B%22id%22%3A%22~it96%22%2C%22name%22%3A%22-%E2%88%9A4%2F5%22%2C%22matrix%22%3A%22%7B%7B%E2%88%9A%E2%85%98%2C%E2%88%9A%E2%85%95%7D%2C%7B-%E2%88%9A%E2%85%95%2C%E2%88%9A%E2%85%98%7D%7D%22%7D%5D%7D}{\textcolor{red}{following this link}}.

\begin{proof}
	Our compression circuit starts with the reverse unitary $U_{n,n}^{\dagger}$ (using size $\bigO(n^2)$ and depth $\bigO(n)$). 
	It is followed by a mapping of states $\ket{0}^{\otimes n-\ell}\ket{1}^{\otimes \ell}$ to the one-hot encoding $\ket{0}^{\otimes n-\ell}\ket{1}\ket{0}^{\otimes \ell-1}$,
	which can be implemented with size and depth $\bigO(n)$ with a simple stair of $\CNOT$-gates with control $n-\ell$ and target $n-\ell+1$ for increasing $\ell$, 
	see Figure~\ref{fig:superposition-compression}~(right). Finally, the one-hot encoding $\ket{0}^{\otimes n-\ell}\ket{1}\ket{0}^{\otimes \ell-1}$ is mapped to 
	the binary encoding $\ket{\ell}$ of $\ell$ (with padded leading zeroes), as illustrated in Figure~\ref{fig:superposition-compression}~(right):
	\begin{align*}
		%(\alpha\ket{0} + \beta\ket{1})^{\otimes n} 	=\\	
		\sum \alpha^{n-\ell}\beta^{\ell} \tbinom{n}{\ell}^{1/2} \dicke{n}{\ell}	
		& \xrightarrow[\hspace*{8ex}]{U_{n,n}^{\dagger}}				\sum \alpha^{n-\ell}\beta^{\ell} \tbinom{n}{\ell}^{1/2} \ket{0}^{\otimes n-\ell}\ket{1}^{\otimes \ell}			\\
		& \xrightarrow[\hspace*{8ex}]{\substack{\CNOT\\\text{stair}}}			\alpha^n\ket{0}^{\otimes n} + \sum_{\ell>0} \alpha^{n-\ell}\beta^{\ell} \tbinom{n}{\ell}^{1/2} \ket{0}^{\otimes n-\ell}\ket{1}\ket{0}^{\otimes \ell-1}		\\
		& \xrightarrow[\hspace*{8ex}]{\substack{\text{encoding}\\\text{change}}}	\sum \alpha^{n-\ell}\beta^{\ell} \tbinom{n}{\ell}^{1/2} \ket{\ell}.
	\end{align*}
	It remains to show that mapping each state $\ket{0}^{\otimes n-\ell}\ket{1}\ket{0}^{\otimes \ell-1}$ into the bottom $\lceil \log(n+1)\rceil$ qubits encoding $\ket{\ell}$
	can be implemented with a circuit of size $\bigO(n^2)$ and depth $\tilde{\bigO}(n)$.
	
	This is done in the following way, for increasing $\ell$: 
	First, controlled on qubit $n-\ell$ we $\CNOT$ into the up to $\lceil \log(n+1) \rceil$ target bottom qubits which represent the number $\ell$ in binary 
	(for $n$ and $n-1$ these are the qubits themselves, on which we perform no operation). 
	Then, controlled on the binary representation in the last $\ell$ qubits (not including padded $0$s), we perform a single multi-control Toffoli on the $(n-\ell)$th qubit as target.
	An implementation of $m$-control Toffoli gates with $\CNOT$ and single-qubit gates requires at least $2m$ $\CNOT$ gates~\cite{Shende2009}, 
	and it is known that a $O(m)$ $\CNOT$ and single-qubit gates are sufficient~\cite{Barenco1995}, even if no ancilla qubits are present~\cite{Gidney2015}.
	This immediately gives us $\bigO(n\log n)$ gates and $\bigO(n\log n)$ depth.

	For LNN architectures, using SWAP gates we let each processed qubit $n-\ell$ (for $\ell \geq \lceil \log(n+1)\rceil$) sift up to the top wires, 
	in order to bring the next qubit $n-\ell-1$ into direct neighborhood of the bottom $\lceil \log(n+1)\rceil$ qubits. Since sifting up operations can be done in parallel,
	this requires $\bigO(n^2)$ gates but only $\bigO(n)$ depth, see Figure~\ref{fig:superposition-compression}~(right).
	As mentioned, the $\bigO(n)$ many multi-target and multi-control operations can be implemented using $\bigO(\log n)$ (arbitrary 2-qubit) $\CNOT$ gates each.
	However, these might not be between neighboring qubits. Moving qubits to neighboring positions and back needs up to $\bigO(\log n)$ SWAP operations. 
	In total, this adds $\bigO(n\log^2 n)$ gates and depth, concluding our proof.	
\end{proof}

%% Conclusion
\section{Conclusions}
We presented a deterministic quantum circuit for the preparation of
Dicke states  $\dicke{n}{k}$ with depth $\bigO(n)$ and  $\bigO(kn)$
gates in total. We showed that these bounds hold for Linear Nearest
Neighbor architectures, that the circuit can be extended to prepare
arbitrary symmetric pure states, and that we can use it for quantum
compression. For future work, the main open problem is that of
characterizing the set of quantum states that can be prepared in
polynomial time, of which Dicke states are one example.

\paragraph{Acknowledgments.} We would like to thank Yiğit Subaşı for helpful discussions.
Research presented in this article was supported by the Center for Nonlinear Studies CNLS and the Laboratory Directed Research and Development program of Los Alamos National Laboratory under project number 20190495DR.
\hfill LA-UR-19-22718

\nocite{quantikz}
\bibliographystyle{plainurl}
\bibliography{DS-bib.bib}

\newpage
\appendix
\section{Quantum Compression: Comparison with other Work}
\label{app:comparison}

\begin{figure}[ht!]
	\centering
	\newcommand{\Ugate}[2]{\gate[wires=#1,style={yshift=-2}]{\mathit{#2}}}
	\newcommand{\Upgate}[2]{\gate[wires=#1,style={dashed,yshift=-2}]{\mathit{#2}}}
	\begin{tikzpicture}
	\clip (-8.2,1.4) rectangle (8.3,-1.5);
	\node[scale=0.74]{
		\begin{tikzcd}[row sep={20pt,between origins}, execute at end picture={
					\draw[green!50!black,->] ([yshift=4pt]\tikzcdmatrixname-1-8.north) to[bend right=70] node[below] {$\checkmark$} ([yshift=6pt]\tikzcdmatrixname-1-7.north);
			}]
			\lstick[wires=5]{\rotatebox{90}{\large\dicke{5}{3}}}	
			& \Ugate{2}{i}	& \Ugate{3}{ii}	& \qw			& \Ugate{2}{ii}			& \qw		& \qw				& \Ugate{2}{ii}			& \qw				& \qw		& \qw			& \qw \rstick{\ket{0}}		\\
			& \qw		& \qw		& \Ugate{2}{i}		& \vqw{2}			& \Ugate{3}{ii}	& \qw				& \vqw{3}			& \Ugate{2}{ii}			& \qw		& \qw			& \qw \rstick{\ket{0}}		\\
			& \qw		& \qw		& \qw			& \qw				& \qw		& \Ugate{2}{i}			& \qw				& \vqw{2}			& \Ugate{3}{ii}	& \qw			& \qw \rstick{\ket{1}}		\\
			& \qw		& \qw		& \qw			& \gate[style={yshift=2pt}]{}	& \qw		& \qw				& \qw				& \qw				& \qw		& \Ugate{2}{i}		& \qw \rstick{\ket{1}}		\\
			& \qw		& \qw		& \qw			& \qw				& \qw		& \qw				& \gate[style={yshift=2pt}]{}	& \gate[style={yshift=2pt}]{}	& \qw		& \qw			& \qw \rstick{\ket{1}}		
		\end{tikzcd}
		\qquad
		\begin{tikzcd}[row sep={20pt,between origins}, execute at end picture={
				\draw[red,->] ([yshift=4pt]\tikzcdmatrixname-1-8.north) to[bend right=70] node {$\times$} ([yshift=4pt]\tikzcdmatrixname-1-7.north);
			}]
			\lstick[wires=5]{\rotatebox{90}{\large\dicke{5}{3}}}	
			& \Ugate{2}{I}	& \Ugate{3}{II}	& \Upgate{3}{III}	& \Ugate{2}{II}			& \qw		& \Upgate{1}{\phantom{I}}	& \Ugate{2}{II}			& \qw				& \qw		& \Upgate{1}{}		& \qw \rstick{\ket{0}}		\\
			& \qw		& \qw		& \qw			& \vqw{2}			& \Ugate{3}{II}	& \qw				& \vqw{3}			& \Ugate{2}{II}			& \qw		& \qw			& \qw \rstick{\ket{0}}		\\
			& \qw		& \qw		& \qw			& \qw				& \qw		& \Upgate{2}{III}		& \qw				& \vqw{2}			& \Ugate{3}{II}	& \qw			& \qw \rstick{\ket{1}}		\\
			& \qw		& \qw		& \qw			& \gate[style={yshift=2pt}]{}	& \qw		& \vqw{-3}			& \qw				& \qw				& \qw		& \Upgate{2}{III}	& \qw \rstick{\ket{0}}		\\
			& \qw		& \qw		& \qw			& \qw				& \qw		& \qw				& \gate[style={yshift=2pt}]{}	& \gate[style={yshift=2pt}]{}	& \qw		& \vqw{-4}		& \qw \rstick{\ket{0}}		
		\end{tikzcd}
	};
	\end{tikzpicture}\\[-2ex]
	\caption{Comparison of two circuits mapping Dicke states $\dicke{n}{\ell}$ to computational basis states as a precursor to quantum compression: 
	(left) Mapping to $\ket{0}^{\otimes n-\ell}\ket{1}^{\otimes \ell}$ [this paper].
	(right) Direct mapping to one-hot encoding $\ket{0}^{\otimes \ell-1}\ket{1}\ket{0}^{\otimes n-\ell}$~\cite{Plesch2010}.}
	\label{fig:comparison}
\end{figure}

In this appendix, we briefly review the differences between our quantum compression circuit and the one given by Plesch and Bu\v{z}ek~\cite{Plesch2010}.
Both circuits start with a unitary mapping Dicke states to computational basis states, with a small difference: While the latter maps every Dicke state $\dicke{n}{\ell}$
to the one-hot encoding $\ket{0}^{\otimes \ell-1}\ket{1}\ket{0}^{\otimes n-\ell}$ of its Hamming weight $\ell$, 
our circuit for starters maps it to the state $\ket{0}^{\otimes n-\ell}\ket{1}^{\otimes \ell}$. 
Surprisingly, our approach leads to a quadratically improved circuit depth of $\bigO(n)$ over $\bigO(n^2)$.

The two ideas behind the different mappings can be best described like this: 
In several rounds, from $a=2$ to $a=n$, we ``scan'' the first $a$ qubits for excited qubits, ``accumulating'' all found $\ket{1}$s at the bottom of the scanned $a$ qubits. 
In contrast, the existing work \emph{counts} the number of encountered excited qubits, storing the counter in the one-hot encoding in the first $a$ qubits.
These actions are defined in the following gates, see Figure~\ref{fig:comparison} for placements:\\
\begin{minipage}[t]{0.44\textwidth}
	\footnotesize
	\begin{align*}
		\blocki\hspace*{21ex}
		\ket{00}_{a-1}										& \rightarrow \ket{00}_{a-1}			\\
		\ket{11}_{a-1}										& \rightarrow \ket{11}_{a-1}			\\	
		\smash{\sqrt{\tfrac{1}{a}}\ket{01}_{a-1} + \sqrt{\tfrac{a-1}{a}}\ket{10}}_{a-1}		& \rightarrow \ket{01}_{a-1}			\\[3ex]
		\blockii{}\hspace*{19.4ex}
		\ket{00}_b\ket{0}_a									& \rightarrow \ket{00}_b\ket{0}_a		\\
		\ket{01}_b\ket{0}_a									& \rightarrow \ket{01}_b\ket{0}_a		\\
		\ket{00}_b\ket{1}_a									& \rightarrow \ket{00}_b\ket{1}_a		\\
		\ket{11}_b\ket{1}_a									& \rightarrow \ket{11}_b\ket{1}_a		\\[-1ex]
		\sqrt{\tfrac{a-b}{a}} \ket{01}_b\ket{1}_a + \sqrt{\tfrac{b}{a}} \ket{11}_b\ket{0}_a	& \rightarrow \ket{01}_b\ket{1}_a	
	\end{align*}
	\normalsize
\end{minipage}\hfill%
\begin{minipage}[t]{0.52\textwidth}
	\footnotesize
	\begin{align*}
		\mathit{(I)}\hspace*{28ex}
		\ket{00}_1										& \rightarrow \ket{00}_1			\\
		\ket{11}_1										& \rightarrow \ket{01}_1			\\	
		\smash{\sqrt{\tfrac{1}{2}}\ket{01}_1 + \sqrt{\tfrac{1}{2}}\ket{10}}_1			& \rightarrow \ket{10}_1			\\[3ex]
		\mathit{(II)}\hspace*{24ex}
		\ket{00}_b\ket{0}_a									& \rightarrow \ket{00}_b\ket{0}_a		\\
		\ket{10}_b\ket{0}_a									& \rightarrow \ket{10}_b\ket{0}_a		\\
		\ket{00}_b\ket{1}_a									& \rightarrow \ket{00}_b\ket{1}_a		\\
		\ket{01}_b\ket{1}_a									& \rightarrow \ket{01}_b\ket{1}_a		\\[-1ex]
		\sqrt{\tfrac{b+1}{a}} \ket{10}_b\ket{1}_a + \sqrt{\tfrac{a-b-1}{a}} \ket{01}_b\ket{0}_a	& \rightarrow \ket{01}_b\ket{0}_a		\\[3ex]
		\mathit{(III)}\hspace*{21.4ex}
		\ket{0}_1\ket{00}_{a-1}									& \rightarrow \ket{0}_1\ket{00}_{a-1}		\\
		\ket{0}_1\ket{10}_{a-1}									& \rightarrow \ket{0}_1\ket{10}_{a-1}		\\
		\ket{0}_1\ket{11}_{a-1}									& \rightarrow \ket{0}_1\ket{01}_{a-1}		\\[-1ex]
		\sqrt{\tfrac{1}{a}}\ket{0}_1\ket{01}_{a-1}+\sqrt{\tfrac{a-1}{a}}\ket{1}_1\ket{00}_{a-1}	& \rightarrow \ket{1}_1\ket{00}_{a-1}
	\end{align*}
	\normalsize
\end{minipage}\\[2ex]
A comparison of the results after each gate in the two approaches can be found in Table~\ref{tbl:comparison}. 
One can clearly see the two ideas of ``accumulating'' versus ``counting'' the encountered excited qubits: 
After the end of each $a$-round, the states are the same up to a transformation on the first $a$ qubits of the form 
$\ket{0}^{\otimes a-\ell}\ket{1}^{\otimes \ell} \longleftrightarrow \ket{0}^{\otimes \ell-1}\ket{1}\ket{0}^{\otimes a-\ell}$.

\begin{table}[ht!]
	\renewcommand{\arraystretch}{1.3}
	\centering
	\scriptsize
	\begin{tabular}{@{}llll@{}}
		Gate			& $a$ $b$	& Result after each step, this paper (w/o parallelisation)						& Result after each step, existing work~\cite{Plesch2010}						\\
		\hline\hline                                                                                                                                                                                                                                            
		init			&  		& $\sqrt{10}\dicke{5}{3}$										& $\sqrt{10}\dicke{5}{3}$										\\
		\hline                                                                                                                                                                                                                                                  
		$\mathit{i/I}$		& 2		& $\ket{00}\ket{111} + \sqrt{6}\ket{01}\dicke{3}{2} + \sqrt{3}\ket{11}\dicke{3}{1}$			& $\ket{00}\ket{111} + \sqrt{6}\ket{10}\dicke{3}{2} + \sqrt{3}\ket{01}\dicke{3}{1}$			\\
		\hline                                                                                                                                                                                                                                                  
		$\mathit{ii/II}$	& 3 1		& $\ket{001}\ket{11} + \sqrt{2}\ket{010}\ket{11} + \sqrt{6}\ket{011}\dicke{2}{1} + \ket{111}\ket{00}$	& $\ket{001}\ket{11} + \sqrt{2}\ket{100}\ket{11} + \sqrt{6}\ket{010}\dicke{2}{1} + \ket{011}\ket{00}$	\\
		$\mathit{i/III}$	& 3 		& $\sqrt{3}\ket{001}\ket{11} + \sqrt{6}\ket{011}\dicke{2}{1} + \ket{111}\ket{00}$			& $\sqrt{3}\ket{100}\ket{11} + \sqrt{6}\ket{010}\dicke{2}{1} + \ket{001}\ket{00}$			\\
		\hline                                                                                                                                                                                                                                                  
		$\mathit{ii/II}$	& 4 1		& $\sqrt{3}\ket{001}\ket{11} + \sqrt{3}\ket{011}\ket{01} + \sqrt{4}\ket{011}\ket{10}$			& $\sqrt{6}\ket{010}\ket{01} + \sqrt{3}\ket{010}\ket{10} + \ket{001}\ket{00}$				\\
		$\mathit{ii/II}$	& 4 2		& $\sqrt{6}\ket{0011}\ket{1} + \sqrt{4}\ket{0111}\ket{0}$						& $\sqrt{6}\ket{0100}\ket{1} + \sqrt{4}\ket{0010}\ket{0}$						\\
		$\mathit{i/III}$	& 4		& $\sqrt{6}\ket{0011}\ket{1} + \sqrt{4}\ket{0111}\ket{0}$						& $\sqrt{6}\ket{0100}\ket{1} + \sqrt{4}\ket{0010}\ket{0}$ 						\\
		\hline                                                                                                                                                                                                                                                  
		$\mathit{ii/II}$	& 5 1		& $\sqrt{6}\ket{0011}\ket{1} + \sqrt{4}\ket{0111}\ket{0}$						& $\sqrt{6}\ket{0100}\ket{1} + \sqrt{4}\ket{0010}\ket{0}$						\\
		$\mathit{ii/II}$	& 5 2		& $\sqrt{10}\ket{00111}$										& $\sqrt{10}\ket{00100}$										\\	
		$\mathit{ii/II}$	& 5 3		& $\sqrt{10}\ket{00111}$										& $\sqrt{10}\ket{00100}$										\\
		$\mathit{i/III}$	& 5 		& $\sqrt{10}\ket{00111}$										& $\sqrt{10}\ket{00100}$										\\
		\hline
	\end{tabular}
	\normalsize
	\caption{Comparison of the result after each gate given in the two circuits of Figure~\ref{fig:comparison} for the exemplary input state $\dicke{5}{3}$
	(the normalization factor of $\tfrac{1}{\surd10}$ has been omitted in all states).}
	\label{tbl:comparison}
\end{table}

The main difference in the circuits is that for each $a$, our gates form a stair-shape ending in a gate of type $\blocki$, 
while a direct mapping to the one-hot encoding needs to have a ``wrap-around'' three-qubit gate of type $\mathit{(III)}$.
The latter prevents the parallelization achieved by pushing stairs into each other as described in Section~\ref{sec:circuit-size}, 
see Figure~\ref{fig:comparison}.\footnote{We remark that in the original presentation~\cite{Plesch2010} of circuit Figure~\ref{fig:comparison}~(right), 
	it was claimed that gates of type $\mathit{(III)}$ are not necessary to process the Dicke state $\dicke{5}{3}$. 
	However, as can be seen in Table~\ref{tbl:comparison}, the first such gate \emph{is required}, and in general, 
	for a Dicke state $\dicke{n}{\ell}$ gates of type $\mathit{(III)}$ are required up to round $a = n-\ell+1$ 
(the latest position where one may encounter an excited state for the very first time).}
Hence all gates need to be applied sequentially, giving a circuit of depth $\bigO(n^2)$ compared to our depth $\bigO(n)$ construction.

\end{document}